\DeclareMathOperator*{\argmax}{argmax}
\newtheorem{theorem}{Theorem}
\def\ps@pprintTitle{%
 \let\@oddhead\@empty
 \let\@evenhead\@empty
 \def\@oddfoot{}%
 \let\@evenfoot\@oddfoot}
\begin{document}

\begin{frontmatter}

\title{An Application of Deep Reinforcement Learning to Algorithmic Trading}
\author[1]{Thibaut Th\'{e}ate}\corref{cor1}
\ead{thibaut.theate@uliege.be}
\author[1]{Damien Ernst}
\ead{dernst@uliege.be}
\address[1]{Montefiore Institute, University of Li\`{e}ge (All\'{e}e de la d\'{e}couverte 10, 4000 Li\`{e}ge, Belgium)}
\cortext[cor1]{Corresponding author.}

\begin{abstract}
    This scientific research paper presents an innovative approach based on deep reinforcement learning (DRL) to solve the algorithmic trading problem of determining the optimal trading position at any point in time during a trading activity in the stock market. It proposes a novel DRL trading policy so as to maximise the resulting Sharpe ratio performance indicator on a broad range of stock markets. Denominated the Trading Deep Q-Network algorithm (TDQN), this new DRL approach is inspired from the popular DQN algorithm and significantly adapted to the specific algorithmic trading problem at hand. The training of the resulting reinforcement learning (RL) agent is entirely based on the generation of artificial trajectories from a limited set of stock market historical data. In order to objectively assess the performance of trading strategies, the research paper also proposes a novel, more rigorous performance assessment methodology. Following this new performance assessment approach, promising results are reported for the TDQN algorithm.
\end{abstract}

\begin{keyword}
Artificial intelligence \sep deep reinforcement learning \sep algorithmic trading \sep trading policy.
\end{keyword}

\end{frontmatter}

\section{Introduction}
\label{SectionIntroduction}

For the past few years, the interest in \textit{artificial intelligence} (AI) has grown at a very fast pace, with numerous research papers published every year. A key element for this growing interest is related to the impressive successes of \textit{deep learning} (DL) techniques which are based on \textit{deep neural networks} (DNN) - mathematical models directly inspired by the human brain structure. These specific techniques are nowadays the state of the art in many applications such as speech recognition, image classification or natural language processing. In parallel to DL, another field of research has recently gained much more attention from the research community: \textit{deep reinforcement learning} (DRL). This family of techniques is concerned with the learning process of an intelligent agent (i) interacting in a sequential manner with an unknown environment (ii) aiming to maximise its cumulative rewards and (iii) using DL techniques to generalise the information acquired from the interaction with the environment. The many recent successes of DRL techniques highlight their ability to solve complex sequential decision-making problems.\\

Nowadays, an emerging industry which is growing extremely fast is the \textit{financial technology} industry, generally referred to by the abbreviation \textit{FinTech}. The objective of FinTech is pretty simple: to extensively take advantage of technology in order to innovate and improve activities in finance. In the coming years, the FinTech industry is expected to revolutionise the way many decision-making problems related to the financial sector are addressed, including the problems related to trading, investment, risk management, portfolio management, fraud detection and financial advising, to cite a few. Such complex decision-making problems are extremely complex to solve as they generally have a sequential nature and are highly stochastic, with an environment partially observable and potentially adversarial. In particular, \textit{algorithmic trading}, which is a key sector of the FinTech industry, presents particularly interesting challenges. Also called quantitative trading, algorithmic trading is the methodology to trade using computers and a specific set of mathematical rules.\\

The main objective of this research paper is to answer the following question: how to design a novel trading policy (algorithm) based on AI techniques that could compete with the popular algorithmic trading strategies widely adopted in practice? To answer this question, this scientific article presents and analyses a novel DRL solution to tackle the algorithmic trading problem of determining the optimal trading position (long or short) at any point in time during a trading activity in the stock market. The algorithmic solution presented in this research paper is inspired by the popular Deep Q-Network (DQN) algorithm, which has been adapted to the particular sequential decision-making problem at hand. The research question to be answered is all the more relevant as the trading environment presents very different characteristics from those which have already been successfully solved by DRL approaches, mainly significant stochasticity and extremely poor observability.\\

The scientific research paper is structured as follows. First of all, a brief review of the scientific literature around the algorithmic trading field and its main AI-based contributions is presented in Section \ref{SectionLiterature}. Afterwards, Section \ref{SectionFormalisation} introduces and rigorously formalises the particular algorithmic trading problem considered. Additionally, this section makes the link with the reinforcement learning (RL) approach. Then, Section \ref{SectionDRLAlgorithm} covers the complete design of the TDQN trading strategy based on DRL concepts. Subsequently, Section \ref{SectionPerformanceAssessment} proposes a novel methodology to objectively assess the performance of trading strategies. Section \ref{SectionResults} is concerned with the presentation and discussion of the results achieved by the TDQN trading strategy. To end this research paper, Section \ref{SectionConclusion} discusses interesting leads as future work and draws meaningful conclusions.

\section{Literature review}
\label{SectionLiterature}

To begin this brief literature review, two facts have to be emphasised. Firstly, it is important to be aware that many sound scientific works in the field of algorithmic trading are not publicly available. As explained in \cite{Li2017}, due to the huge amount of money at stake, private FinTech firms are very unlikely to make their latest research results public. Secondly, it should be acknowledged that making a fair comparison between trading strategies is a challenging task, due to the lack of a common, well-established framework to properly evaluate their performance. Instead, the authors generally define their own framework with their evident bias. Another major problem is related to the trading costs which are variously defined or even omitted.\\

First of all, most of the works in algorithmic trading are techniques developed by mathematicians, economists and traders who do not exploit AI. Typical examples of classical trading strategies are the \textit{trend following} and \textit{mean reversion} strategies, which are covered in detail in \cite{Chan2009}, \cite{Chan2013} and \cite{Narang2009}. Then, the majority of works applying machine learning (ML) techniques in the algorithmic trading field focus on forecasting. If the financial market evolution is known in advance with a reasonable level of confidence, the optimal trading decisions can easily be computed. Following this approach, DL techniques have already been investigated with good results, see e.g. \cite{Arevalo2016} introducing a trading strategy based on a DNN, and especially \cite{Bao2017} using wavelet transforms, stacked autoencoders and long short-term memory (LSTM). Alternatively, several authors have already investigated RL techniques to solve this algorithmic trading problem. For instance, \cite{Moody2001} introduced a recurrent RL algorithm for discovering new investment policies without the need to build forecasting models, and \cite{Dempster2006} used adaptive RL to trade in foreign exchange markets. More recently, a few works investigated DRL techniques in a scientifically sound way to solve this particular algorithmic trading problem. For instance, one can first mention \cite{Deng2017} which introduced the fuzzy recurrent deep neural network structure to obtain a technical-indicator-free trading system taking advantage of fuzzy learning to reduce the time series uncertainty. One can also mention \cite{Carapuco2018} which studied the application of the deep Q-learning algorithm for trading in foreign exchange markets. Finally, there exist a few interesting works studying the application of DRL techniques to algorithmic trading in specific markets, such as in the field of energy, see e.g. the article \cite{Boukas2020}.\\

To finish with this short literature review, a sensitive problem in the scientific literature is the tendency to prioritise the communication of good results or findings, sometimes at the cost of a proper scientific approach with objective criticism. Going even further, \cite{Ioannidis2005} even states that most published research findings in certain sensitive fields are probably false. Such concern appears to be all the more relevant in the field of financial sciences, especially when the subject directly relates to trading activities. Indeed, \cite{Bailey2014} claims that many scientific publications in finance suffer from a lack of a proper scientific approach, instead getting closer to pseudo-mathematics and financial charlatanism than rigorous sciences. Aware of these concerning tendencies, the present research paper intends to deliver an unbiased scientific evaluation of the novel DRL algorithm proposed.

\section{Algorithmic trading problem formalisation}
\label{SectionFormalisation}

In this section, the sequential decision-making algorithmic trading problem studied in this research paper is presented in detail. Moreover, a rigorous formalisation of this particular problem is performed. Additionally, the link with the RL formalism is highlighted.

\subsection{Algorithmic trading}
\label{SectionAlgorithmicTrading}

Algorithmic trading, also called quantitative trading, is a subfield of finance, which can be viewed as the approach of automatically making trading decisions based on a set of mathematical rules computed by a machine. This commonly accepted definition is adopted in this research paper, although other definitions exist in the literature. Indeed, several authors differentiate the trading decisions (quantitative trading) from the actual trading execution (algorithmic trading). For the sake of generality, algorithmic trading and quantitative trading are considered synonyms in this research paper, defining the entire automated trading process. Algorithmic trading has already proven to be very beneficial to markets, the main benefit being the significant improvement in liquidity, as discussed in \cite{Hendershott2011}. For more information about this specific field, please refer to \cite{Treleaven2013} and \cite{Nuti2011}.\\

There are many different markets suitable to apply algorithmic trading strategies. Stocks and shares can be traded in the stock markets, FOREX trading is concerned with foreign currencies, or a trader could invest in commodity futures, to only cite a few. The recent rise of cryptocurrencies, such as the Bitcoin, offers new interesting possibilities as well. Ideally, the DRL algorithms  developed in this research paper should be applicable to multiple markets. However, the focus will be set on stock markets for now, with an extension to various other markets planned in the future.\\

In fact, a trading activity can be viewed as the management of a portfolio, which is a set of assets including diverse stocks, bonds, commodities, currencies, etc. In the scope of this research paper, the portfolio considered consists of one single stock together with the agent cash. The portfolio value $v_t$ is then composed of the trading agent cash value $v^c_t$ and the share value $v^s_t$, which continuously evolves over time $t$. Buying and selling operations are simply cash and share exchanges. The trading agent interacts with the stock market through an order book, which contains the entire set of buying orders (\textit{bids}) and selling orders (\textit{asks}). An example of a simple order book is depicted in Table \ref{orderBook}. An order represents the willingness of a market participant to trade and is composed of a price $p$, a quantity $q$ and a side $s$ (bid or ask). For a trade to occur, a match between bid and ask orders is required, an event which can only happen if $p_{max}^{bid} \geq p_{min}^{ask}$, with $p_{max}^{bid}$ ($p_{min}^{ask}$) being the maximum (minimum) price of a bid (ask) order. Then, a trading agent faces a very difficult task in order to generate profit: what, when, how, at which price and which quantity to trade. This is the algorithmic trading complex sequential decision-making problem studied in this scientific research paper.

\begin{table}[H]
  \small
  \caption{Example of a simple order book}
  \label{orderBook}
  \centering
  \begin{tabular}{ccc}
    \toprule
    \textbf{Side} $s$ & \textbf{Quantity} $q$ & \textbf{Price} $p$ \\
    \cmidrule(r){1-3}
    Ask & 3000 & 107 \\
    Ask & 1500 & 106 \\
    Ask & 500 & 105 \\
    \hline
    \hline
    Bid & 1000 & 95 \\
    Bid & 2000 & 94 \\
    Bid & 4000 & 93 \\
    \bottomrule
  \end{tabular}
\end{table}

\subsection{Timeline discretisation}
\label{SectionDiscretisation}

Since trading decisions can be issued at any time, the trading activity is a continuous process. In order to study the algorithmic trading problem described in this research paper, a discretisation operation of the continuous timeline is performed. The trading timeline is discretized into a high number of discrete trading time steps $t$ of constant duration $\Delta t$. In this research paper, for the sake of clarity, the increment (decrement) operations $t+1$ ($t-1$) are used to model the discrete transition from time step $t$ to time step $t + \Delta t$ ($t - \Delta t$).\\

The duration $\Delta t$ is closely linked to the trading frequency targeted by the trading agent (very high trading frequency, intraday, daily, monthly, etc.). Such discretisation operation inevitably imposes a constraint with respect to this trading frequency. Indeed, because the duration $\Delta t$ between two time steps cannot be chosen as small as possible due to technical constraints, the maximum trading frequency achievable, equal to $1/\Delta t$, is limited. In the scope of this research paper, this constraint is met as the trading frequency targeted is daily, meaning that the trading agent makes a new decision once every day.

\subsection{Trading strategy}
\label{SectionTradingStrategy}

The algorithmic trading approach is rule based, meaning that the trading decisions are made according to a set of rules: a \textit{trading strategy}. In technical terms, a trading strategy can be viewed as a programmed policy $\pi(a_t|i_t)$, either deterministic or stochastic, which outputs a trading action $a_t$ according to the information available to the trading agent $i_t$ at time step $t$. Additionally, a key characteristic of a trading strategy is its sequential aspect, as illustrated in Figure \ref{tradingStrategy}. An agent executing its trading strategy sequentially applies the following steps:

\begin{enumerate}
    \item Update of the available market information $i_t$.
    \item Execution of the policy $\pi(a_t|i_t)$ to get action $a_t$.
    \item Application of the designated trading action $a_t$.
    \item Next time step $t \rightarrow t+1$, loop back to step 1.
\end{enumerate}

\begin{figure}[H]
    \centering
    \includegraphics[scale=0.33]{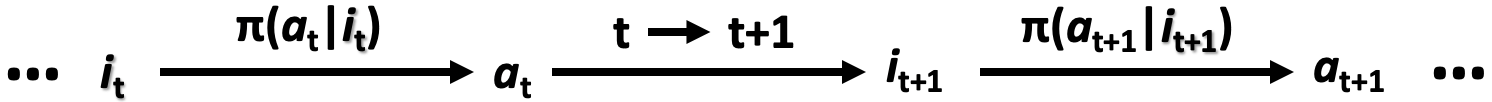}
    \caption{Illustration of a trading strategy execution}
    \label{tradingStrategy}
\end{figure}

In the following subsection, the algorithmic trading sequential decision-making problem, which shares similarities with other problems successfully tackled by the RL community, is casted as an RL problem.

\subsection{Reinforcement learning problem formalisation}
\label{SectionRLFormalism}

As illustrated in Figure \ref{RLScheme}, reinforcement learning is concerned with the sequential interaction of an agent with its environment. At each time step $t$, the RL agent firstly observes the RL environment of internal state $s_t$, and retrieves an observation $o_t$. It then executes the action $a_t$ resulting from its RL policy $\pi(a_t|h_t)$ where $h_t$ is the RL agent history and receives a reward $r_t$ as a consequence of its action. In this RL context, the agent history can be expressed as $h_t = \{(o_\tau, a_\tau, r_\tau) | \tau = 0, 1, ..., t\}$.\\

Reinforcement learning techniques are concerned with the design of policies $\pi$ maximising an optimality criterion, which directly depends on the immediate rewards $r_t$ observed over a certain time horizon. The most popular optimality criterion is the expected discounted sum of rewards over an infinite time horizon. Mathematically, the resulting optimal policy $\pi^*$ is expressed as the following:

\begin{equation}
    \label{EquationOptimalPolicy}
    \pi^* = \argmax_{\pi} \mathbb{E}[R|\pi]
\end{equation}

\begin{equation}
    \label{EquationReturnRL}
    R = \sum_{t=0}^{\infty} \gamma^t r_t
\end{equation}

The parameter $\gamma$ is the discount factor ($\gamma \in [0,1]$). It determines the importance of future rewards. For instance, if $\gamma = 0$, the RL agent is said to be myopic as it only considers the current reward and totally discards the future rewards. When the discount factor increases, the RL agent tends to become more long-term oriented. In the extreme case where $\gamma = 1$, the RL agent considers each reward equally. This key parameter should be tuned according to the desired behaviour.

\begin{figure}[H]
    \centering
    \includegraphics[scale=0.375]{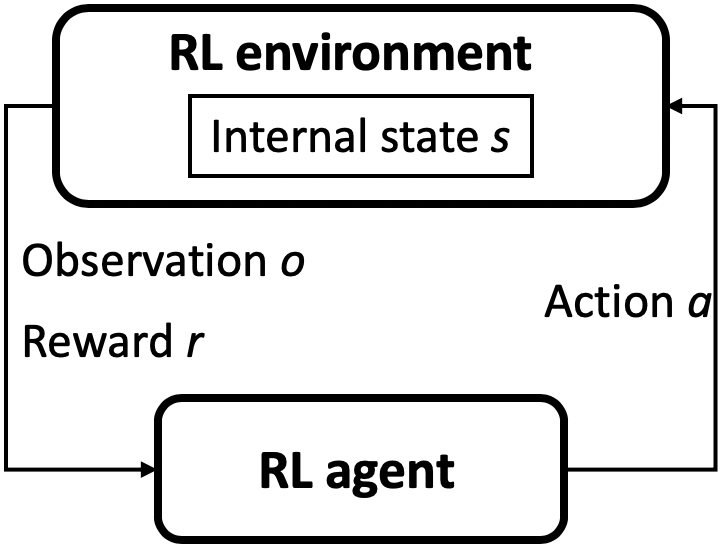}
    \caption{Reinforcement learning core building blocks}
    \label{RLScheme}
\end{figure}

\subsubsection{RL observations}
\label{SectionRLObservations}

In the scope of this algorithmic trading problem, the RL environment is the entire complex trading world gravitating around the RL agent. In fact, this trading environment can be viewed as an abstraction including the trading mechanisms together with every single piece of information capable of having an effect on the trading activity of the agent. A major challenge of the algorithmic trading problem is the extremely poor observability of this environment. Indeed, a significant amount of information is simply hidden to the trading agent, ranging from some companies' confidential information to the other market participants' strategies. In fact, the information available to the RL agent is extremely limited compared to the complexity of the environment. Moreover, this information can take various forms, both quantitative and qualitative. Correctly processing such information and re-expressing it using relevant quantitative figures while minimising the subjective bias is capital. Finally, there are significant time correlation complexities to deal with. Therefore, the information retrieved by the RL agent at each time step should be considered sequentially as a series of information rather than individually.\\

\newpage

At each trading time step $t$, the RL agent observes the stock market whose internal state is $s_t \in \mathcal{S}$. The limited information collected by the agent on this complex trading environment is denoted by $o_t \in \mathcal{O}$. Ideally, this observation space $\mathcal{O}$ should encompass all the information capable of influencing the market prices. Because of the sequential aspect of the algorithmic trading problem, an observation $o_t$ has to be considered as a sequence of both the information gathered during the previous $\tau$ time steps (history) and the newly available information at time step $t$. In this research paper, the RL agent observations can be mathematically expressed as the following:

\begin{equation}
    \label{EquationObservations}
    o_t = \{S(t'),\ D(t'),\ T(t'),\ I(t'),\ M(t'),\ N(t'),\ E(t')\}_{t' = t - \tau}^{t}
\end{equation}

\noindent where:

\begin{itemize}
    \item [$\bullet$] $S(t)$ represents the state information of the RL agent at time step $t$ (current trading position, number of shares owned by the agent, available cash).
    
    \item [$\bullet$] $D(t)$ is the information gathered by the agent at time step $t$ concerning the OHLCV (Open-High-Low-Close-Volume) data characterising the stock market. More precisely, $D(t)$ can be expressed as follows:
    \begin{equation}
    \label{EquationOHLCV}
        D(t) = \{p^O_t,\ p^H_t,\ p^L_t,\ p^C_t,\ V_t\}
    \end{equation}
    \noindent where:
    \begin{itemize}
        \item $p^O_t$ is the stock market price at the opening of the time period $[t - \Delta t,\ t[$.
        \item $p^H_t$ is the highest stock market price over the time period $[t - \Delta t,\ t[$.
        \item $p^L_t$ is the lowest stock market price over the time period $[t - \Delta t,\ t[$.
        \item $p^C_t$ is the stock market price at the closing of the time period $[t - \Delta t,\ t[$.
        \item $V_t$ is the total volume of shares exchanged over the time period $[t - \Delta t,\ t[$.
    \end{itemize}
    
    \item [$\bullet$] $T(t)$ is the agent information regarding the trading time step $t$ (date, weekday, time).
    
    \item [$\bullet$] $I(t)$ is the agent information regarding multiple technical indicators about the stock market targeted at time step $t$. There exist many technical indicators providing extra insights about diverse financial phenomena, such as moving average convergence divergence (MACD), relative strength index (RSI) or average directional index (ADX), to only cite a few.
    
    \item [$\bullet$] $M(t)$ gathers the macroeconomic information at the disposal of the agent at time step $t$. There are many interesting macroeconomic indicators which could potentially be useful to forecast markets' evolution, such as the interest rate or the exchange rate.
    
    \item [$\bullet$] $N(t)$ represents the news information gathered by the agent at time step $t$. These news data can be extracted from various sources such as social media (Twitter, Facebook, LinkedIn), the newspapers, specific journals, etc. Complex sentiment analysis models could then be built to extract meaningful quantitative figures (quantity, sentiment polarity and subjectivity, etc.) from the news. The benefits of such information has already been demonstrated by several authors, see e.g. \cite{Leinweber2011}, \cite{Bollen2011} and \cite{Nuij2014}.
    
    \item [$\bullet$] $E(t)$ is any extra useful information at the disposal of the trading agent at time step $t$, such as other market participants trading strategies, companies' confidential information, similar stock market behaviours, rumours, experts' advice, etc.\\
\end{itemize}

\underline{\textbf{Observation space reduction:}}\\
\indent In the scope of this research paper, it is assumed that the only information considered by the RL agent is the classical OHLCV data $D(t)$ together with the state information $S(t)$. Especially, the reduced observation space $\mathcal{O}$ encompasses the current trading position together with a series of the previous $\tau + 1$ daily open-high-low-close prices and daily traded volume. With such an assumption, the reduced RL observation $o_t$ can be expressed as the following:

\begin{equation}
    \label{EquationObservationsRL}
    o_t = \left\{\{p^O_{t'},\ p^H_{t'},\ p^L_{t'},\ p^C_{t'},\ V_{t'}\}_{t' = t - \tau}^{t},\ P_t\right\}
\end{equation}

\vspace{0.2cm}

\noindent with $P_t$ being the trading position of the RL agent at time step $t$ (either \textit{long} or \textit{short}, as explained in the next subsection of this research paper).

\subsubsection{RL actions}
\label{SectionRLActions}

At each time step $t$, the RL agent executes a trading action $a_t \in \mathcal{A}$ resulting from its policy $\pi(a_t|h_t)$. In fact, the trading agent has to answer several questions: whether, how and how much to trade? Such decisions can be modelled by the quantity of shares bought by the trading agent at time step $t$, represented by $Q_t \in \mathbb{Z}$. Therefore, the RL actions can be expressed as the following:

\begin{equation}
    \label{EquationActions}
    a_t = Q_t
\end{equation}

\vspace{0.1cm}

Three cases can occur depending on the value of $Q_t$:

\begin{itemize}
    \item [$\bullet$] $\underline{Q_t > 0}$: The RL agent \textit{buys} shares on the stock market, by posting new \textit{bid} orders on the order book.
    \item [$\bullet$] $\underline{Q_t < 0}$: The RL agent \textit{sells} shares on the stock market, by posting new \textit{ask} orders on the order book.
    \item [$\bullet$] $\underline{Q_t = 0}$: The RL agent \textit{holds}, meaning that it does not buy nor sell any shares on the stock market.
\end{itemize}

Actually, the real actions occurring in the scope of a trading activity are the orders posted on the order book. The RL agent is assumed to communicate with an external module responsible for the synthesis of these true actions according to the value of $Q_t$: the \textit{trading execution system}. Despite being out of the scope of this paper, it should be mentioned that multiple execution strategies can be considered depending on the general trading purpose.\\

The trading actions have an impact on the two components of the portfolio value, namely the cash and share values. Assuming that the trading actions occur close to the market closure at price $p_t \simeq p^C_t$, the updates of these components are governed by the following equations:

\begin{equation}
    \label{EquationCashValue}
    v^c_{t+1} = v^c_t - Q_t\ p_t
\end{equation}

\begin{equation}
    \label{EquationSharesValue}
    v^s_{t+1} = \underbrace{(n_t + Q_t)}_{n_{t+1}}\ p_{t+1}
\end{equation}

\noindent with $n_t \in \mathbb{Z}$ being the number of shares owned by the trading agent at time step $t$. In the scope of this research paper, negative values are allowed for this quantity. Despite being surprising at first glance, a negative number of shares simply corresponds to shares borrowed and sold, with the obligation to repay the lender in shares in the future. Such a mechanism is particularly interesting as it introduces new possibilities for the trading agent.\\

Two important constraints are assumed concerning the quantity of traded shares $Q_t$. Firstly, contrarily to the share value $v^s_t$ which can be both positive or negative, the cash value $v^c_t$ has to remain positive for every trading time steps $t$. This constraint imposes an upper bound on the number of shares that the trading agent is capable of purchasing, this volume of shares being easily derived from Equation \ref{EquationCashValue}. Secondly, there exists a risk associated with the impossibility to repay the share lender if the agent suffers significant losses. To prevent such a situation from happening, the cash value $v^c_t$ is constrained to be sufficiently large when a negative number of shares is owned, in order to be able to get back to a neutral position ($n_t = 0$). A maximum relative change in prices, expressed in \% and denoted $\epsilon \in \mathbb{R}^+$, is assumed by the RL agent prior to the trading activity. This parameter corresponds to the maximum market daily evolution supposed by the agent over the entire trading horizon, so that the trading agent should always be capable of paying back the share lender as long as the market variation remains below this value. Therefore, the constraints acting upon the RL actions at time step $t$ can be mathematically expressed as follows:

\begin{equation}
    \label{EquationConstraint1}
    v^c_{t+1} \geq 0
\end{equation}

\begin{equation}
    \label{EquationConstraint2}
    v^c_{t+1} \ge - n_{t+1}\ p_t\ (1+\epsilon)
\end{equation}

\noindent with the following condition assumed to be satisfied:

\begin{equation}
    \label{EquationNew1}
    \left| \frac{p_{t+1} - p_t}{p_t}\right| \le \epsilon 
\end{equation}

\vspace{0.5cm}

\underline{\textbf{Trading costs consideration:}}\\
\indent Actually, the modelling represented by Equation \ref{EquationCashValue} is inaccurate and will inevitably lead to unrealistic results. Indeed, whenever simulating trading activities, the trading costs should not be neglected. Such omission is generally misleading as a trading strategy, highly profitable in simulations, may be likely to generate large losses in real trading situations due to these trading costs, especially when the trading frequency is high. The trading costs can be subdivided into two categories. On the one hand, there are explicit costs which are induced by transaction costs and taxes. On the other hand, there are implicit costs, called slippage costs, which are composed of three main elements and are associated to some of the dynamics of the trading environment. The different slippage costs are detailed hereafter:

\begin{itemize}
    \item [$\bullet$] \textbf{Spread costs:} These costs are related to the difference between the minimum ask price $p_{min}^{ask}$ and the maximum bid price $p_{max}^{bid}$, called the \textit{spread}. Because the complete state of the order book is generally too complex to efficiently process or even not available, the trading decisions are mostly based on the middle price $p^{mid} = (p_{max}^{bid} + p_{min}^{ask})/2$. However, a buying (selling) trade issued at $p^{mid}$ inevitably occurs at a price $p \geq p_{min}^{ask}$ ($p \leq p_{max}^{bid}$). Such costs are all the more significant that the stock market liquidity is low compared to the volume of shares traded.
    
    \item [$\bullet$] \textbf{Market impact costs:} These costs are induced by the impact of the trader's actions on the market. Each trade (both buying and selling orders) is potentially capable of influencing the price. This phenomenon is all the more important that the stock market liquidity is low with respect to the volume of shares traded.
    
    \item [$\bullet$] \textbf{Timing costs:} These costs are related to the time required for a trade to physically happen once the trading decision is made, knowing that the market price is continuously evolving. The first cause is the inevitable latency which delays the posting of the orders on the market order book. The second cause is the intentional delays generated by the trading execution system. For instance, a large trade could be split into multiple smaller trades spread over time in order to limit the market impact costs.
\end{itemize}

An accurate modelling of the trading costs is required to realistically reproduce the dynamics of the real trading environment. While explicit costs are relatively easy to take into account, the valid modelling of slippage costs is a truly complex task. In this research paper, the integration of both costs into the RL environment is performed through a heuristic. When a trade is executed, a certain amount of capital equivalent to a percentage $C$ of the amount of money invested is lost. This parameter was realistically chosen equal to 0.1\% in the forthcoming simulations.\\

Practically, these trading costs are directly withdrawn from the trading agent cash. Following the heuristic previously introduced, Equations \ref{EquationCashValue} can be re-expressed with a corrective term modelling the trading costs:

\begin{equation}
    \label{EquationCashValueCorrected}
    v^c_{t+1} = v^c_t - Q_t\ p_t - \underbrace{C\ |Q_{t}|\ p_t}_{\text{Trading costs}}
\end{equation}

Moreover, the trading costs have to be properly considered in the constraint expressed in Equation \ref{EquationConstraint2}. Indeed, the cash value $v^c_t$ should be sufficiently large to get back to a neutral position ($n_t = 0$) when the maximum market variation $\epsilon$ occurs, the trading costs being included. Consequently, Equation \ref{EquationConstraint2} is re-expressed as follows:

\begin{equation}
    \label{EquationConstraint2Bis}
    v^c_{t+1} \ge - n_{t+1}\ p_t\ (1+\epsilon)(1+C)
\end{equation}

Eventually, the RL action space $\mathcal{A}$ can be defined as the discrete set of acceptable values for the quantity of traded shares $Q_t$. Derived in detail in \ref{AppendixA}, the RL action space $\mathcal{A}$ is mathematically expressed as the following:

\begin{equation}
    \label{EquationActionSpace}
    \mathcal{A} = \{Q_t \in \mathbb{Z} \cap [\underline{Q_t},\ \overline{Q_t}]\}
\end{equation}

\noindent where:

\begin{itemize}
    \item [$\bullet$] $\overline{Q_t} = \frac{v^c_t}{p_t\ (1+C)}$
    \item [$\bullet$] $\underline{Q_t} = \left\{\begin{matrix}
                                                    \frac{\Delta_t}{p_t\ \epsilon(1 + C)}\ \ \ \ \ \ \ \ \ \text{if } \Delta_t \ge 0\\ 
                                                    \frac{\Delta_t}{p_t\ (2C + \epsilon(1 + C))} \ \ \ \text{if } \Delta_t < 0
                                                \end{matrix}
                                             \right.$\\
                        with $\Delta_t = -v^c_t-n_t\ p_t\ (1+\epsilon)(1+C)$.
\end{itemize}

\vspace{0.5cm}

\underline{\textbf{Action space reduction:}}\\
\indent In the scope of this scientific research paper, the action space $\mathcal{A}$ is reduced in order to lower the complexity of the algorithmic trading problem. The reduced action space is composed of only two RL actions which can be mathematically expressed as the following:

\begin{equation}
    \label{EquationActionsReduced}
    a_t = Q_t \in \{Q^{Long}_t,\ Q^{Short}_t\}
\end{equation}

The first RL action $Q^{Long}_t$ maximises the number of shares owned by the trading agent, by converting as much cash value $v^c_t$ as possible into share value $v^s_t$. It can be mathematically expressed as follows:

\begin{equation}
    \label{EquationActionLong}
    Q^{Long}_t =
    \begin{cases}
        \left\lfloor \frac{v^c_t}{p_t\ (1+C)} \right\rfloor & \text{if } a_{t-1} \neq Q^{Long}_{t-1}, \\
        0 & \text{otherwise.}
    \end{cases}
\end{equation}

The action $Q^{Long}_t$ is always valid as it is obviously included into the original action space $\mathcal{A}$ defined by Equation \ref{EquationActionSpace}. As a result of this action, the trading agent owns a number of shares $N^{Long}_t = n_t + Q^{Long}_t$. On the contrary, the second RL action, designated by $Q^{Short}_t$, converts share value $v^s_t$ into cash value $v^c_t$, such that the RL agent owns a number of shares equal to $-N^{Long}_t$. This operation can be mathematically expressed as the following:

\begin{equation}
    \label{EquationActionShort}
    \widehat{Q^{Short}_t} =
    \begin{cases}
        -2n_t - \left\lfloor \frac{v^c_t}{p_t\ (1+C)} \right\rfloor & \text{if } a_{t-1} \neq Q^{Short}_{t-1}, \\
        0 & \text{otherwise.}
    \end{cases}
\end{equation}

However, the action $\widehat{Q^{Short}_t}$ may violate the lower bound $\underline{Q_t}$ of the action space $\mathcal{A}$ when the price significantly increases over time. Eventually, the second RL action $Q^{Short}_t$ is expressed as follows:

\begin{equation}
    Q^{Short}_t = \text{max} \left\{\widehat{Q^{Short}_t},\ \underline{Q_t} \right\}
\end{equation}

To conclude this subsection, it should be mentioned that the two reduced RL actions are actually related to the next trading position of the agent, designated as $P_{t+1}$. Indeed, the first action $Q^{Long}_t$ induces a \textit{long} trading position because the number of owned shares is positive. On the contrary, the second action $Q^{Short}_t$ always results in a number of shares which is negative, which is generally referred to as a \textit{short} trading position in finance.

\subsubsection{RL rewards}
\label{SectionRLRewards}

For this algorithmic trading problem, a natural choice for the RL rewards is the strategy daily returns. Intuitively, it makes sense to favour positive returns which are an evidence of a profitable strategy. Moreover, such quantity has the advantage of being independent of the number of shares $n_t$ currently owned by the agent. This choice is also motivated by the fact that it allows to avoid a sparse reward setup, which is more complex to deal with. The RL rewards can be mathematically expressed as the following:

\begin{equation}
\label{EquationReward}
    r_t = \frac{v_{t+1} - v_t}{v_t}
\end{equation}

\subsection{Objective}
\label{SectionObjective}

Objectively assessing the performance of a trading strategy is a tricky task, due to the numerous quantitative and qualitative factors to consider. Indeed, a well-performing trading strategy is not simply expected to generate profit, but also to efficiently mitigate the risk associated with the trading activity. The balance between these two goals varies depending on the trading agent profile and its willingness to take extra risks. Although intuitively convenient, maximising the profit generated by a trading strategy is a necessary but not sufficient objective. Instead, the core objective of a trading strategy is the maximisation of the \textit{Sharpe ratio}, a performance indicator widely used in the fields of finance and algorithmic trading. It is particularly well suited for the performance assessment task as it considers both the generated profit and the risk associated with the trading activity. Mathematically, the Sharpe ratio $S_r$ is expressed as the following:

\begin{equation}
    \label{EquationSharpeRatio}
    S_r = \frac{\mathbb{E}[R_s - R_f]}{\sigma_r} = \frac{\mathbb{E}[R_s - R_f]}{\sqrt{\text{var}[R_s - R_f]}} \simeq \frac{\mathbb{E}[R_s]}{\sqrt{\text{var}[R_s]}}
\end{equation}

\noindent where:

\begin{itemize}
    \item [$\bullet$] $R_s$ is the trading strategy return over a  certain time period, modelling its profitability.
    \item [$\bullet$] $R_f$ is the risk-free return, the expected return from a totally safe investment (negligible).
    \item [$\bullet$] $\sigma_r$ is the standard deviation of the trading strategy excess return $R_s - R_f$, modelling its riskiness.
\end{itemize}

In order to compute the Sharpe ratio $S_r$ in practice, the daily returns achieved by the trading strategy are firstly computed using the formula $\rho_t = (v_t - v_{t-1})/v_{t-1}$. Then, the ratio between the returns mean and standard deviation is evaluated. Finally, the annualised Sharpe ratio is obtained by multiplying this value by the square root of the number of trading days in a year (252).\\

Moreover, a well-performing trading strategy should ideally be capable of achieving acceptable performance on diverse markets presenting very different patterns. For instance, the trading strategy should properly handle both bull and bear markets (respectively strong increasing and decreasing price trends), with different levels of volatility. Therefore, the research paper's core objective is the development of a novel trading strategy based on DRL techniques to maximise the average Sharpe ratio computed on the entire set of existing stock markets.\\

Despite the fact that the ultimate objective is the maximisation of the Sharpe ratio, the DRL algorithm adopted in this scientific paper actually maximises the expected discounted sum of rewards (daily returns) over an infinite time horizon. This optimisation criterion, which does not exactly corresponds to maximising profits but is very close to that, can in fact be seen as a relaxation of the Sharpe ratio criterion. A future interesting research direction would be to narrow the gap between these two objectives.

\section{Deep reinforcement learning algorithm design}
\label{SectionDRLAlgorithm}

In this section, a novel DRL algorithm is designed to solve the algorithmic trading problem previously introduced. The resulting trading strategy, denominated the Trading Deep Q-Network algorithm (TDQN), is inspired from the successful DQN algorithm presented in \cite{Mnih2013} and is significantly adapted to the specific decision-making problem at hand. Concerning the training of the RL agent, artificial trajectories are generated from a limited set of stock market historical data.

\subsection{Deep Q-Network algorithm}
\label{SectionDQN}

The Deep Q-Network algorithm, generally referred to as DQN, is a DRL algorithm capable of successfully learning control policies from high-dimensional sensory inputs. It is in a way the successor of the popular Q-learning algorithm introduced in \cite{Watkins1992}. This DRL algorithm is said to be \textit{model-free}, meaning that a complete model of the environment is not required and that trajectories are sufficient. Belonging to the \textit{Q-learning} family of algorithms, it is based on the learning of an approximation of the state-action value function, which is represented by a DNN. In such context, learning the Q-function amounts to learning the parameters $\theta$ of this DNN. Finally, the DQN algorithm is said to be \textit{off-policy} as it exploits in batch mode previous experiences $e_t=(s_t, a_t, r_t, s_{t+1})$ collected at any point during training.\\

For the sake of brevity, the DQN algorithm is illustrated in Figure \ref{DQNAlgorithm}, but is not extensively presented in this paper. Besides the original publications (\cite{Mnih2013} and \cite{Mnih2015}), there exists a great scientific literature around this algorithm, see for instance \cite{Hasselt2016}, \cite{Wang2016}, \cite{Schaul2016}, \cite{Bellemare2017}, \cite{Fortunato2018} and \cite{Hessel2018}. Concerning DL techniques, interesting resources are \cite{LeCun2015}, \cite{Goodfellow2015} and \cite{Goodfellow2016}. For more information about RL, the reader can refer to the following textbooks and surveys: \cite{Sutton2018}, \cite{Szepesvri2010}, \cite{Busoniu2010}, \cite{Arulkumaran2017} and \cite{Shao2019}.

\begin{figure}
    \centering
    \includegraphics[scale=0.29]{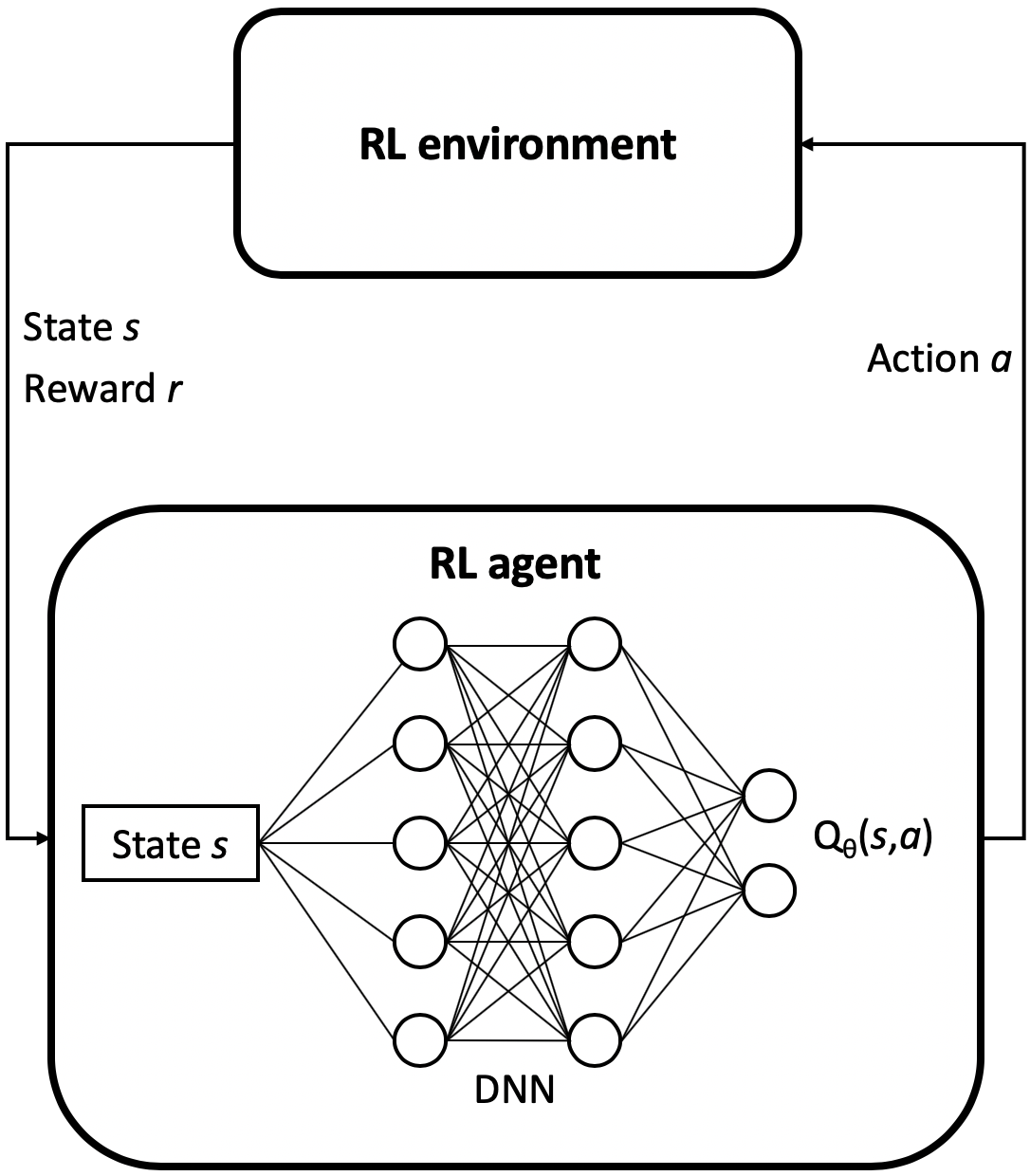}
    \caption{Illustration of the DQN algorithm}
    \label{DQNAlgorithm}
\end{figure}

\subsection{Artificial trajectories generation}
\label{SectionTrajectories}

In the scope of the algorithmic trading problem, a complete model of the environment $\mathcal{E}$ is not available. The training of the TDQN algorithm is entirely based on the generation of artificial trajectories from a limited set of stock market historical daily OHLCV data. A trajectory $\tau$ is defined as a sequence of observations $o_t \in \mathcal{O}$, actions $a_t \in \mathcal{A}$ and rewards $r_t$ from an RL agent for a certain number $T$ of trading time steps $t$:

\begin{equation*}
    \tau = \Big(\{o_0, a_0, r_0\}, \{o_1, a_1, r_1\}, ..., \{o_T, a_T, r_T\}\Big)
\end{equation*}

Initially, although the environment $\mathcal{E}$ is unknown, one disposes of a single real trajectory, corresponding to the historical behaviour of the stock market, i.e. the particular case of the RL agent being inactive. This original trajectory is composed of the historical prices and volumes together with long actions executed by the RL agent with no money at its disposal, to represent the fact that no shares are actually traded. For this algorithmic trading problem, new fictive trajectories are then artificially generated from this unique true trajectory to simulate interactions with the environment $\mathcal{E}$. The historical stock market behaviour is simply considered unaffected by the new actions performed by the trading agent. The artificial trajectories generated are simply composed of the sequence of historical real observations associated with various sequences of trading actions from the RL agent. For such practice to be scientifically acceptable and lead to realistic simulations, the trading agent should not be able to influence the stock market behaviour. This assumption generally holds when the number of shares traded by the trading agent is low with respect to the liquidity of the stock market.\\

In addition to the generation of artificial trajectories just described, a trick is employed to slightly improve the exploration of the RL agent. It relies on the fact that the reduced action space $\mathcal{A}$ is composed of only two actions: long ($Q^{Long}_t$) and short ($Q^{Short}_t$). At each trading time step $t$, the chosen action $a_t$ is executed on the trading environment $\mathcal{E}$ and the opposite action $a^-_t$ is executed on a copy of this environment $\mathcal{E}^-$. Although this trick does not completely solve the challenging exploration/exploitation trade-off, it enables the RL agent to continuously explore at a small extra computational cost.

\subsection{Diverse modifications and improvements}
\label{SectionAlgorithmImprovements}

The DQN algorithm was chosen as starting point for the novel DRL trading strategy developed, but was significantly adapted to the specific algorithmic trading decision-making problem at hand. The diverse modifications and improvements, which are mainly based on the numerous simulations performed, are summarised hereafter:

\begin{itemize}
    \item [$\bullet$] \textbf{Deep neural network architecture:} The first difference with respect to the classical DQN algorithm is the architecture of the DNN approximating the action-value function $Q(s,a)$. Due to the different nature of the input (time-series instead of raw images), the convolutional neural network (CNN) has been replaced by a classical feedforward DNN with some leaky rectified linear unit (Leaky ReLU) activation functions.
    
    \item [$\bullet$] \textbf{Double DQN:} The DQN algorithm suffers from substantial overestimations, this overoptimism harming the algorithm performance. In order to reduce the impact of this undesired phenomenon, the article \cite{Hasselt2016} presents the double DQN algorithm which is based on the decomposition of the target max operation into both action selection and action evaluation.
    
    \item [$\bullet$] \textbf{ADAM optimiser:} The classical DQN algorithm implements the RMSProp optimiser. However, the ADAM optimiser, introduced in \cite{Kingma2015}, experimentally proves to improve both the training stability and the convergence speed of the DRL algorithm.
    
    \item [$\bullet$] \textbf{Huber loss:} While the classical DQN algorithm implements a mean squared error (MSE) loss, the Huber loss experimentally improves the stability of the training phase. Such observation is explained by the fact that the MSE loss significantly penalises large errors, which is generally desired but has a negative side-effect for the DQN algorithm because the DNN is supposed to predict values that depend on its own input. This DNN should not radically change in a single training update because this would also lead to a significant change in the target, which could actually result in a larger error. Ideally, the update of the DNN should be performed in a slower and more stable manner. On the other hand, the mean absolute error (MAE) has the drawback of not being differentiable at 0. A good trade-off between these two losses is the Huber loss $H$:
    
    \begin{equation}
        \label{EquationHuberLoss}
        H(x) = \left\{
                    \begin{array}{ll}
                        \frac{1}{2}x^2 & \mbox{if } |x| \leq 1, \\
                        |x| - \frac{1}{2} & \mbox{otherwise.}
                    \end{array}
                \right.
    \end{equation}
    
    \begin{figure}[H]
        \centering
        \includegraphics[scale=0.44, trim={0.5cm 0cm 0.5cm 1.3cm}, clip]{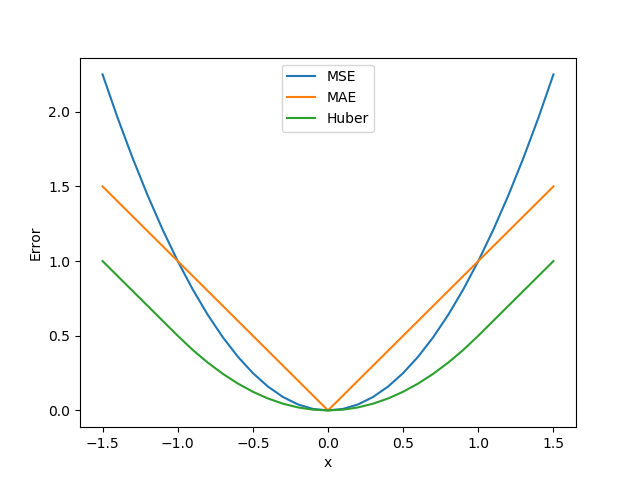}
        \caption{Comparison of the MSE, MAE and Huber losses}
        \label{Huber}
    \end{figure}
    
    \item [$\bullet$] \textbf{Gradient clipping:} The gradient clipping technique is implemented in the TDQN algorithm to solve the gradient exploding problem which induces significant instabilities during the training of the DNN.
    
    \item [$\bullet$] \textbf{Xavier initialisation:} While the classical DQN algorithm simply initialises the DNN weights randomly, the Xavier initialisation is implemented to improve the algorithm convergence. The idea is to set the initial weights so that the gradients variance remains constant across the DNN layers.
    
    \item [$\bullet$] \textbf{Batch normalisation layers:} This DL technique, introduced by \cite{Ioffe2015}, consists in normalising the input layer by adjusting and scaling the activation functions. It brings many benefits including a faster and more robust training phase as well as an improved generalisation.
    
    \item [$\bullet$] \textbf{Regularisation techniques:} Because a strong tendency to overfit was observed during the first experiments with the DRL trading strategy, three regularisation techniques are implemented: \textit{Dropout}, \textit{L2 regularisation} and \textit{Early Stopping}.
    
    \item[$\bullet$] \textbf{Preprocessing and normalisation:} The training loop of the TDQN algorithm is preceded by both a preprocessing and a normalisation operation of the RL observations $o_t$. Firstly, because the high-frequency noise present in the trading data was experimentally observed to lower the algorithm generalisation, a low-pass filtering operation is executed. However, such a preprocessing operation has a cost as it modifies or even destroys some potentially useful trading patterns and introduces a non-negligible lag. Secondly, the resulting data are transformed in order to convey more meaningful information about market movements. Typically, the daily evolution of prices is considered rather than the raw prices. Thirdly, the remaining data are normalised.
    
    \item [$\bullet$] \textbf{Data augmentation techniques:} A key challenge of this algorithmic trading problem is the limited amount of available data, which are in addition generally of poor quality. As a counter to this major problem, several data augmentation techniques are implemented: signal shifting, signal filtering and artificial noise addition. The application of such data augmentation techniques will artificially generate new trading data which are slightly different but which result in the same financial phenomena.
\end{itemize}

Finally, the algorithm underneath the TDQN trading strategy is depicted in detail in Algorithm \ref{TDQN}.

\begin{algorithm*}
\small
\caption{TDQN algorithm}
\begin{algorithmic} 
\STATE Initialise the experience replay memory $M$ of capacity $C$.
\STATE Initialise the main DNN weights $\theta$ (Xavier initialisation).
\STATE Initialise the target DNN weights $\theta^- = \theta$.
\FOR{episode = 1 \TO N}
    \STATE Acquire the initial observation $o_1$ from the environment $\mathcal{E}$ and preprocess it.
    \FOR{t = 1 \TO T}
        \STATE With probability $\epsilon$, select a random action $a_t$ from $\mathcal{A}$.
        \STATE Otherwise, select $a_t = \mbox{arg max}_{a \in \mathcal{A}} Q(o_t, a; \theta)$.
        \STATE Copy the environment $\mathcal{E}^- = \mathcal{E}$.
        \STATE Interact with the environment $\mathcal{E}$ (action $a_t$) and get the new observation $o_{t+1}$ and reward $r_t$.
        \STATE Perform the same operation on $\mathcal{E}^-$ with the opposite action $a^-_t$, getting $o^-_{t+1}$ and $r^-_t$.
        \STATE Preprocess both new observations $o_{t+1}$ and $o^-_{t+1}$.
        \STATE Store both experiences $e_t = (o_t, a_t, r_t, o_{t+1})$ and $e^-_t = (o_t, a^-_t, r^-_t, o^-_{t+1})$ in $M$.
        \IF{t \% T' = 0}
            \STATE Randomly sample from $M$ a minibatch of $N_e$ experiences $e_i = (o_i, a_i, r_i, o_{i+1})$.
            \STATE Set $y_i = \left\{ \begin{array}{ll}
                             r_i & \mbox{if the state } s_{i+1} \mbox{ is terminal,} \\
                             r_i + \gamma \ Q(o_{i+1}, \mbox{arg max}_{a \in \mathcal{A}} Q(o_{i+1}, a; \theta); \theta^-) & \mbox{otherwise.}
                          \end{array}
                         \right.$
            \STATE Compute and clip the gradients based on the Huber loss $H(y_i,\ Q(o_i, a_i; \theta))$.
            \STATE Optimise the main DNN parameters $\theta$ based on these clipped gradients.
            \STATE Update the target DNN parameters $\theta^- = \theta$ every $N^-$ steps.\\ 
        \ENDIF
        \STATE Anneal the $\epsilon$-Greedy exploration parameter $\epsilon$.
    \ENDFOR
\ENDFOR
\end{algorithmic} 
\label{TDQN}
\end{algorithm*}

\section{Performance assessment}
\label{SectionPerformanceAssessment}

An accurate performance evaluation approach is capital in order to produce meaningful results. As previously hinted, this procedure is all the more critical because there has been a real lack of a proper performance assessment methodology in the algorithmic trading field. In this section, a novel, more reliable methodology is presented to objectively assess the performance of algorithmic trading strategies, including the TDQN algorithm.

\subsection{Testbench}
\label{SectionTestbench}

In the literature, the performance of a trading strategy is generally assessed on a single instrument (stock market or others) for a certain period of time. Nevertheless, the analysis resulting from such a basic approach should not be entirely trusted, as the trading data could have been specifically selected so that a trading strategy looks profitable, even though it is not the case in general. To eliminate such bias, the performance should ideally be assessed on multiple instruments presenting diverse patterns. Aiming to produce trustful conclusions, this research paper proposes a testbench composed of 30 stocks presenting diverse characteristics (sectors, regions, volatility, liquidity, etc.). The testbench is depicted in Table \ref{testbench}. To avoid any confusion, the official reference for each stock (ticker) is specified in parentheses. To avoid any ambiguities concerning the training and evaluation protocols, it should be mentioned that a new trading strategy is trained for each stock included in the testbench. Nevertheless, for the sake of generality, all the algorithm hyperparameters remain unchanged over the entire testbench.\\

\begin{table*}
  \small
  \caption{Performance assessment testbench}
  \label{testbench}
  \centering
  \begin{tabular}{cccc}
    \toprule
    \multirow{2}{*}{\textbf{Sector}} & \multicolumn{3}{c}{\textbf{Region}}\\
    \cmidrule(r){2-4}
    & \textit{American} & \textit{European} & \textit{Asian} \\
    \midrule
    \multirow{3}{*}{\textit{Trading index}} & Dow Jones (DIA) & FTSE 100 (EZU) & Nikkei 225 (EWJ) \\
    & S\&P 500 (SPY) & & \\
    & NASDAQ (QQQ) & & \\
    \cmidrule(r){1-4}
    \multirow{5}{*}{\textit{Technology}} & Apple (AAPL) & Nokia (NOK) & Sony (6758.T) \\
    & Google (GOOGL) & Philips (PHIA.AS) & Baidu (BIDU) \\
    & Amazon (AMZN) & Siemens (SIE.DE) & Tencent (0700.HK) \\
    & Facebook (FB) & & Alibaba (BABA) \\
    & Microsoft (MSFT) & & \\
    & Twitter (TWTR) & & \\
    \cmidrule(r){1-4}
    \textit{Financial services} & JPMorgan Chase (JPM) & HSBC (HSBC) & CCB (0939.HK) \\
    \cmidrule(r){1-4}
    \textit{Energy} & ExxonMobil (XOM) & Shell (RDSA.AS) & PetroChina (PTR) \\
    \cmidrule(r){1-4}
    \textit{Automotive} & Tesla (TSLA) & Volkswagen (VOW3.DE) & Toyota (7203.T) \\
    \cmidrule(r){1-4}
    \textit{Food} & Coca Cola (KO) & AB InBev (ABI.BR) & Kirin (2503.T) \\
    \bottomrule
  \end{tabular}
\end{table*}

Regarding the trading horizon, the eight years preceding the publication year of the research paper are selected to be representative of the current market conditions. Such a short-time period could be criticised because it may be too limited to be representative of the entire set of financial phenomena. For instance, the financial crisis of 2008 is rejected, even though it could be interesting to assess the robustness of trading strategies with respect to such an extraordinary event. However, this choice was motivated by the fact that a shorter trading horizon is less likely to contain significant market regime shifts which would seriously harm the training stability of the trading strategies. Finally, the trading horizon of eight years is divided into both training and test sets as follows:

\begin{itemize}
    \item [$\bullet$] \textbf{Training set:} 01/01/2012 $\rightarrow$ 31/12/2017.
    \item [$\bullet$] \textbf{Test set:} 01/01/2018 $\rightarrow$ 31/12/2019.
\end{itemize}

A validation set is also considered as a subset of the training set for the tuning of the numerous TDQN algorithm hyperparameters. Note that the RL policy DNN parameters $\theta$ are fixed during the execution of the trading strategy on the entire test set, meaning that the new experiences acquired are not valued for extra training. Nevertheless, such practice constitutes an interesting future research direction.\\

To end this subsection, it should be noted that the proposed testbench could be improved thanks to even more diversification. The obvious addition would be to include more stocks with different financial situations and properties. Another interesting addition would be to consider different training/testing time periods while excluding the significant market regime shifts. Nevertheless, this last idea was discarded in this scientific article due to the important time already required to produce results for the proposed testbench.

\subsection{Benchmark trading strategies}
\label{SectionBenchmark}

In order to properly assess the strengths and weaknesses of the TDQN algorithm, some benchmark algorithmic trading strategies were selected for comparison purposes. Only the classical trading strategies commonly used in practice were considered, excluding for instance strategies based on DL techniques or other advanced approaches. Despite the fact that the TDQN algorithm is an active trading strategy, both passive and active strategies are taken into consideration. For the sake of fairness, the strategies share the same input and output spaces presented in Section \ref{SectionRLActions} ($\mathcal{O}$ and $\mathcal{A}$). The following list summarises the benchmark strategies selected:

\begin{itemize}
    \item [$\bullet$] Buy and hold (B\&H).
    \item [$\bullet$] Sell and hold (S\&H).
    \item [$\bullet$] Trend following with moving averages (TF).
    \item [$\bullet$] Mean reversion with moving averages (MR).
\end{itemize}

For the sake of brevity, a detailed description of each strategy is not provided in this research paper. The reader can refer to \cite{Chan2009}, \cite{Chan2013} or \cite{Narang2009} for more information. The first two benchmark trading strategies (B\&H and S\&H) are said to be passive, as there are no changes in trading position over the trading horizon. On the contrary, the other two benchmark strategies (TF and MR) are active trading strategies, issuing multiple changes in trading positions over the trading horizon. On the one hand, a trend following strategy is concerned with the identification and the follow-up of significant market trends, as depicted in Figure \ref{TrendFollowing}. On the other hand, a mean reversion strategy, illustrated in Figure \ref{MeanReversion}, is based on the tendency of a stock market to get back to its previous average price in the absence of clear trends. By design, a trend following strategy generally makes a profit when a mean reversion strategy does not, the opposite being true as well. This is due to the fact that these two families of trading strategies adopt opposite positions: a mean reversion strategy always denies and goes against the trends while a trend following strategy follows the movements.

\vspace{0.3cm}

\begin{figure}[H]
    \centering
    \includegraphics[scale=0.28]{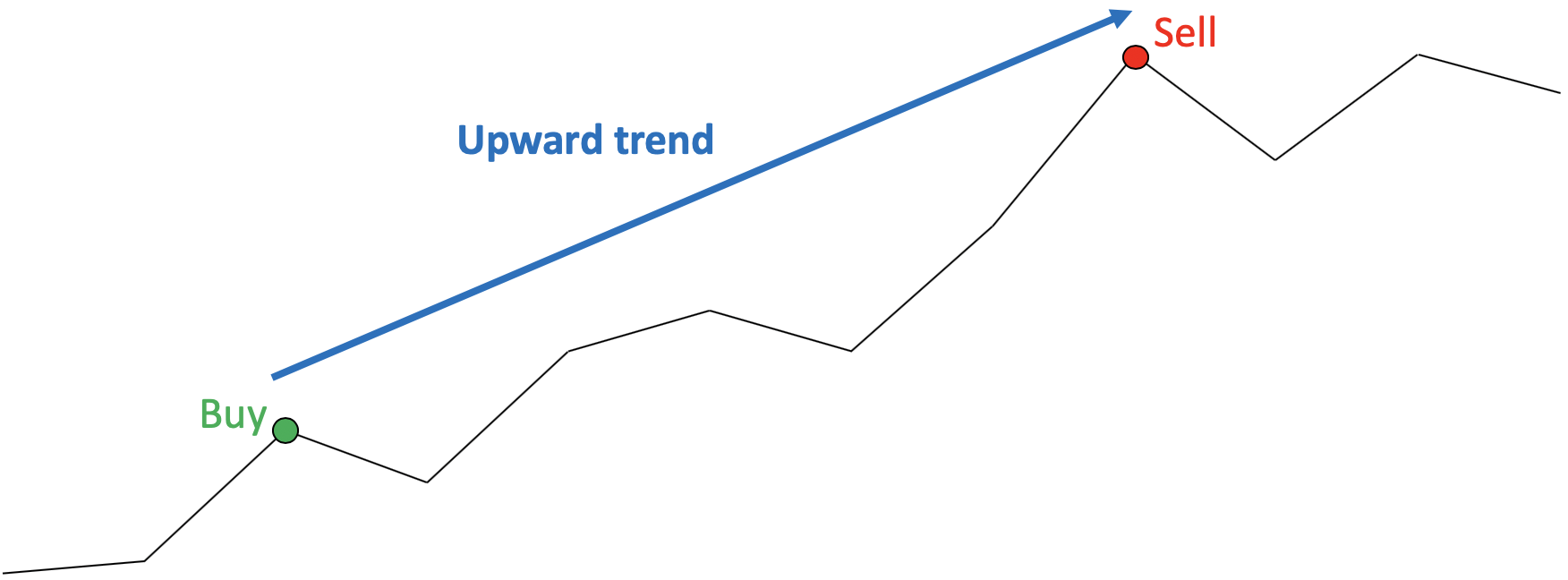}
    \caption{Illustration of a typical trend following trading strategy}
    \label{TrendFollowing}
\end{figure}

\vspace{0.3cm}

\begin{figure}[H]
    \centering
    \includegraphics[scale=0.34]{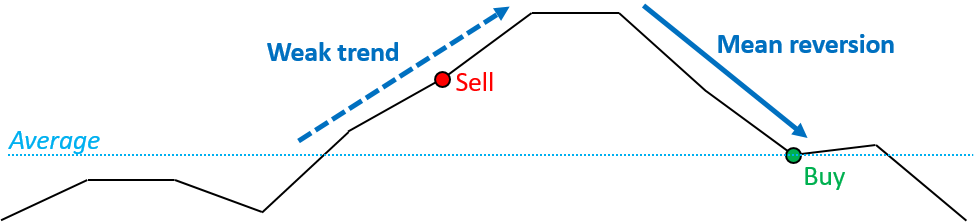}
    \caption{Illustration of a typical mean reversion trading strategy}
    \label{MeanReversion}
\end{figure}

\subsection{Quantitative performance assessment}
\label{SectionQuantitativePerformance}

\begin{table*}
  \small
  \caption{Quantitative performance assessment indicators}
  \label{performanceIndicators}
  \centering
  \begin{tabular}{ll}
    \toprule
    \textbf{Performance indicator} & \textbf{Description} \\
    \midrule
    Sharpe ratio & Return of the trading activity compared to its riskiness.\\
    Profit \& loss & Money gained or lost at the end of the trading activity.\\
    Annualised return & Annualised return generated during the trading activity.\\
    Annualised volatility & Modelling of the risk associated with the trading activity.\\
    Profitability ratio & Percentage of winning trades made during the trading activity.\\
    Profit and loss ratio & Ratio between the trading activity trades average profit and loss.\\
    Sortino ratio & Similar to the Sharpe ratio with the negative risk penalised only.\\
    Maximum drawdown & Largest loss from a peak to a trough during the trading activity.\\
    Maximum drawdown duration & Time duration of the trading activity maximum drawdown.\\
    \bottomrule
  \end{tabular}
\end{table*}

The quantitative performance assessment consists in defining one performance indicator or more to numerically quantify the performance of an algorithmic trading strategy. Because the core objective of a trading strategy is to be profitable, its performance should be linked to the amount of money earned. However, such reasoning omits to consider the risk associated with the trading activity which should be efficiently mitigated. Generally, a trading strategy achieving a small but stable profit is preferred to a trading strategy achieving a huge profit in a very unstable way after suffering from multiple losses. It eventually depends on the investor profile and the willingness to take extra risks to potentially earn more.\\

Multiple performance indicators were selected to accurately assess the performance of a trading strategy. As previously introduced in Section \ref{SectionObjective}, the most important one is certainly the Sharpe ratio. This performance indicator, widely used in the field of algorithmic trading, is particularly informative as it combines both profitability and risk. Besides the Sharpe ratio, this research paper considers multiple other performance indicators to provide extra insights. Table \ref{performanceIndicators} presents the entire set of performance indicators employed to quantify the performance of a trading strategy.\\

Complementarily to the computation of these numerous performance indicators, it is interesting to graphically represent the trading strategy behaviour. Plotting both the stock market price $p_t$ and portfolio value $v_t$ evolutions together with the trading actions $a_t$ issued by the trading strategy seems appropriate to accurately analyse the trading policy. Moreover, such visualisation could also provide extra insights about the performance, the strengths and weaknesses of the strategy analysed.

\section{Results and discussion}
\label{SectionResults}

In this section, the TDQN trading strategy is evaluated following the performance assessment methodology previously described. Firstly, a detailed analysis is performed for both a case that give good results and a case for which the results were mitigated. This highlights the strengths, weaknesses and limitations of the TDQN algorithm. Secondly, the performance achieved by the DRL trading strategy on the entire testbench is summarised and analysed. Finally, some additional discussions about the discount factor parameter, the trading costs influence and the main challenges faced by the TDQN algorithm are provided. The experimental code supporting the results presented is publicly available at the following link:\\

\url{https://github.com/ThibautTheate/An-Application-of-Deep-Reinforcement-Learning-to-Algorithmic-Trading}.

\subsection{Good results - Apple stock}
\label{SectionGoodResults}

The first detailed analysis concerns the execution of the TDQN trading strategy on the Apple stock, resulting in promising results. Similar to many DRL algorithms, the TDQN algorithm is subject to a non-negligible variance. Multiple training experiments with the exact same initial conditions will inevitably lead to slightly different trading strategies of varying performance. As a consequence, both a typical run of the TDQN algorithm and its expected performance are presented hereafter.\\

\textbf{Typical run:} Firstly, Table \ref{PerformanceApple} presents the performance achieved by each trading strategy considered, the initial amount of money being equal to \$100,000. The TDQN algorithm achieves good results from both an earnings and a risk mitigation point of view, clearly outperforming all the benchmark active and passive trading strategies. Secondly, Figure \ref{TestApple} plots both the stock market price $p_t$ and RL agent portfolio value $v_t$ evolutions, together with the actions $a_t$ outputted by the TDQN algorithm. It can be observed that the DRL trading strategy is capable of accurately detecting and benefiting from major trends, while being more hesitant during market behavioural shifts when the volatility increases. It can also be seen that the trading agent generally lags slightly behind the market trends, meaning that the TDQN algorithm learned to be more reactive than proactive for this particular stock. This behaviour is expected with such a limited observation space $\mathcal{O}$ not including the reasons for the future market directions (new product announcement, financial report, macroeconomics, etc.). However, this does not mean that the policies learned are purely reactive. Indeed, it was observed that the RL agent may decide to adapt its trading position before a trend inversion by noticing an increase in volatility, therefore anticipating and being proactive.\\

\begin{table*}
  \small
  \caption{Performance assessment for the Apple stock}
  \label{PerformanceApple}
  \centering
  \begin{tabular}{lccccc}
    \toprule
    \textbf{Performance indicator} & \textbf{B\&H} & \textbf{S\&H} & \textbf{TF} & \textbf{MR} & \textbf{TDQN} \\
    \midrule
    Sharpe ratio & 1.239 & -1.593 & 1.178 & -0.609 & 1.484 \\
    Profit \& loss [\$] & 79823 & -80023 & 68738 & -34630 & 100288 \\
    Annualised return [\%] & 28.86 & -100.00 & 25.97 & -19.09 & 32.81 \\
    Annualised volatility [\%] & 26.62 & 44.39 & 24.86 & 28.33 & 25.69 \\
    Profitability ratio [\%] & 100 & 0.00 & 42.31 & 56.67 & 52.17 \\
    Profit and loss ratio & $\infty$ & 0.00 & 3.182 & 0.492 & 2.958 \\
    Sortino ratio & 1.558 & -2.203 & 1.802 & -0.812 & 1.841 \\
    Max drawdown [\%] & 38.51 & 82.48 & 14.89 & 51.12 & 17.31 \\
    Max drawdown duration [days] & 62 & 250 & 20 & 204 & 25 \\
    \bottomrule
  \end{tabular}
\end{table*}

\begin{figure}
    \centering
    \includegraphics[scale=0.29, trim={0cm 0cm 0cm 0cm}, clip]{}
    \caption{TDQN algorithm execution for the Apple stock (test set)}
    \label{TestApple}
\end{figure}

\textbf{Expected performance:} In order to estimate the expected performance as well as the variance of the TDQN algorithm, the same RL trading agent is trained multiple times. Figure \ref{ExpectedPerformanceApple} plots the averaged (over 50 iterations) performance of the TDQN algorithm for both the training and test sets with respect to the number of training episodes. This expected performance is comparable to the performance achieved during the typical run of the algorithm. It can also be noticed that the overfitting tendency of the RL agent seems to be properly handled for this specific market. Please note that the test set performance being temporarily superior to the training set performance is not a mistake. It simply indicates an easier to trade and more profitable market for the test set trading period for the Apple stock. This example perfectly illustrates a major difficulty of the algorithmic trading problem: the training and test sets do not share the same distributions. Indeed, the distribution of the daily returns is continuously changing, which complicates both the training of the DRL trading strategy and its performance evaluation.

\begin{figure}
    \centering
    \includegraphics[scale=0.6, trim={0.3cm 0.2cm 1cm 1.15cm}, clip]{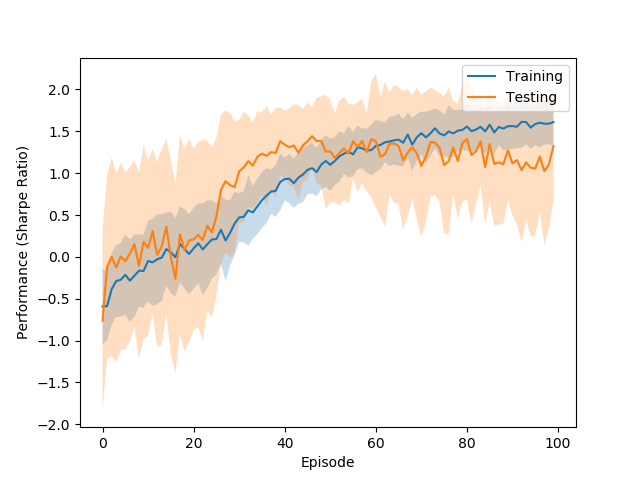}
    \caption{TDQN algorithm expected performance for the Apple stock}
    \label{ExpectedPerformanceApple}
\end{figure}

\subsection{Mitigated results - Tesla stock}
\label{SectionMitigatedResults}

The same detailed analysis is performed on the Tesla stock, which presents very different characteristics compared to the Apple stock, such as a pronounced volatility. In contrast to the promising performance achieved on the previous stock, this case was specifically selected to highlight the limitations of the TDQN algorithm.\\

\begin{table*}
  \small
  \caption{Performance assessment for the Tesla stock}
  \label{PerformanceTesla}
  \centering
  \begin{tabular}{lccccc}
    \toprule
    \textbf{Performance indicator} & \textbf{B\&H} & \textbf{S\&H} & \textbf{TF} & \textbf{MR} & \textbf{TDQN} \\
    \midrule
    Sharpe ratio & 0.508 & -0.154 & -0.987 & 0.358 & 0.261 \\
    Profit \& loss [\$] & 29847 & -29847 & -73301 & 8600 & 98 \\
    Annualised return [\%] & 24.11 & -7.38 & -100.00 & 19.02 & 12.80 \\
    Annualised volatility [\%] & 53.14 & 46.11 & 52.70 & 58.05 & 52.09 \\
    Profitability ratio [\%] & 100 & 0.00 & 34.38 & 67.65 & 38.18 \\
    Profit and loss ratio & $\infty$ & 0.00 & 0.534 & 0.496 & 1.621 \\
    Sortino ratio & 0.741 & -0.205 & -1.229 & 0.539 & 0.359 \\
    Max drawdown [\%] & 52.83 & 54.09 & 79.91 & 65.31 & 58.95 \\
    Max drawdown duration [days] & 205 & 144 & 229 & 159 & 331 \\
    \bottomrule
  \end{tabular}
\end{table*}

\textbf{Typical run:} Similar to the previous analysis, Table \ref{PerformanceTesla} presents the performance achieved by every trading strategies considered, the initial amount of money being equal to \$100,000. The mitigated results achieved by the benchmark active strategies suggest that the Tesla stock is quite difficult to trade, which is partly due to its significant volatility. Even though the TDQN algorithm achieves a positive Sharpe ratio, almost no profit is generated. Moreover, the risk level associated with this trading activity cannot really be considered acceptable. For instance, the maximum drawdown duration is particularly large, which would result in a stressful situation for the operator responsible for the trading strategy. Figure \ref{TestTesla}, which plots both the stock market price $p_t$ and RL agent portfolio value $v_t$ evolutions together with the actions $a_t$ outputted by the TDQN algorithm, confirms this observation. Moreover, it can be clearly observed that the pronounced volatility of the Tesla stock induces a higher trading frequency (changes in trading positions, which correspond to the situation where $a_{t} \neq a_{t-1}$) despite the non-negligible trading costs, which increases even more the riskiness of the DRL trading strategy.\\

\begin{figure}
    \centering
    \includegraphics[scale=0.17, trim={0.1cm 0.1cm 0.1cm 0.1cm}, clip]{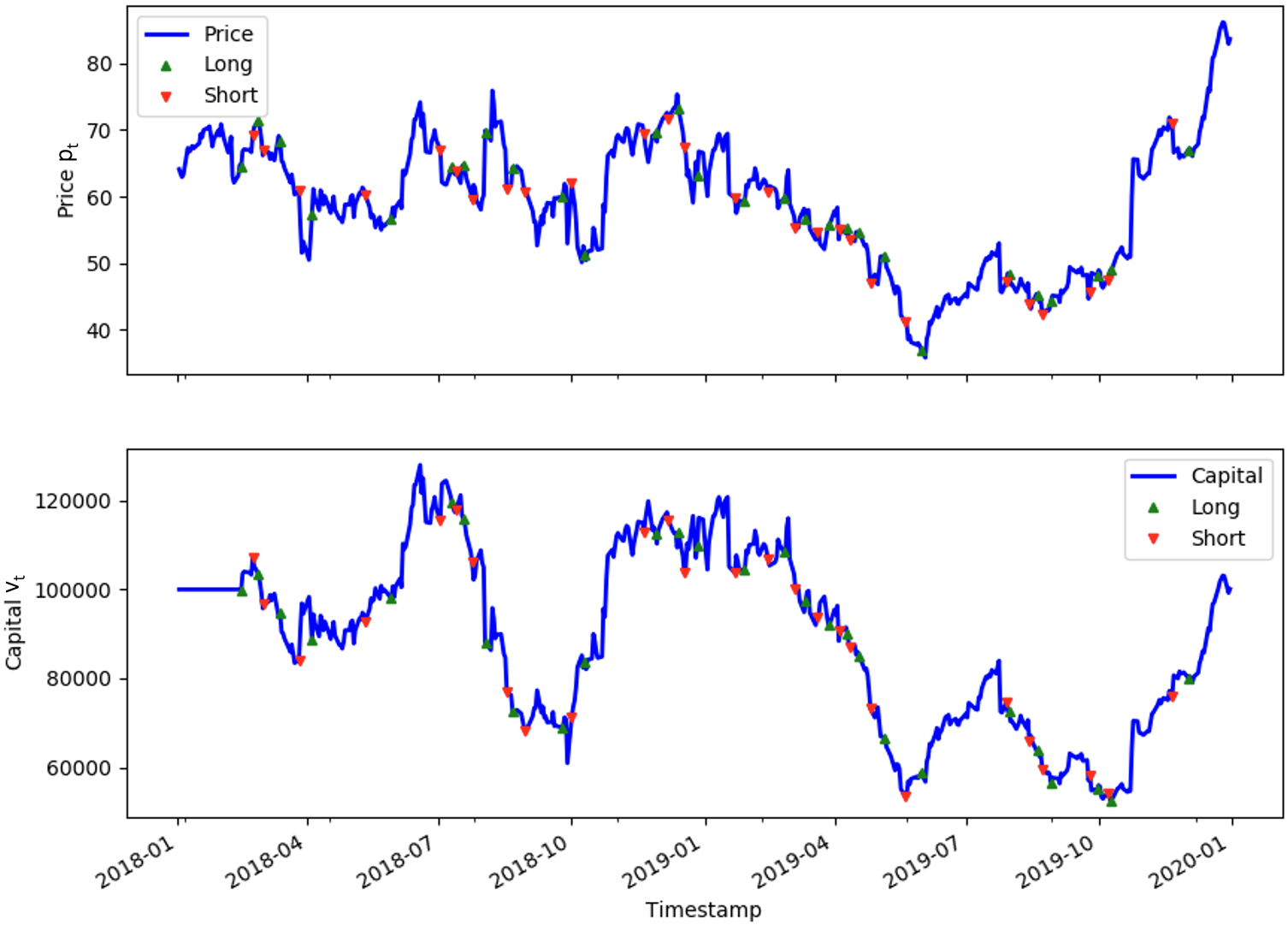}
    \caption{TDQN algorithm execution for the Tesla stock (test set)}
    \label{TestTesla}
\end{figure}

\textbf{Expected performance:} Figure \ref{ExpectedPerformanceTesla} plots the expected performance of the TDQN algorithm for both the training and test sets as a function of the number of training episodes (over 50 iterations). It can be directly noticed that this expected performance is significantly better than the performance achieved by the typical run previously analysed, which can therefore be considered as not really representative of the average behaviour. This highlights a key limitation of the TDQN algorithm: the substantial variance which may result in selecting poorly performing policies compared to the expected performance. The significantly higher performance achieved on the training set also suggests that the DRL algorithm is subject to overfitting in this specific case, despite the multiple regularisation techniques implemented. This overfitting phenomenon can be partially explained by the observation space $\mathcal{O}$ which is too limited to efficiently apprehend the Tesla stock. Even though this overfitting phenomenon does not seem to be too harmful in this particular case, it may lead to poor performance for other stocks.

\begin{figure}
    \centering
    \includegraphics[scale=0.6, trim={0.3cm 0.2cm 1cm 1.15cm}, clip]{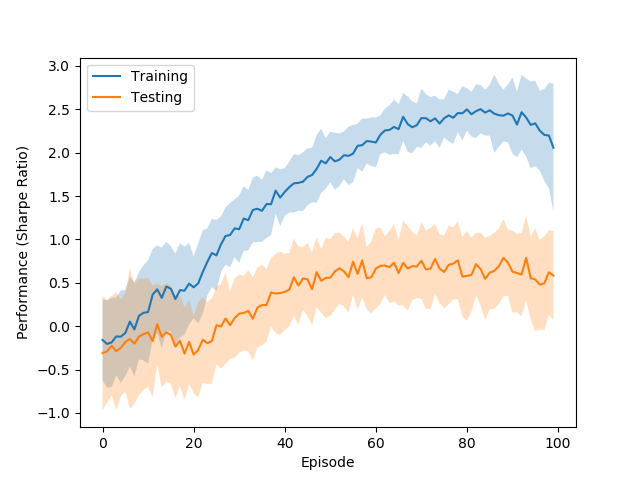}
    \caption{TDQN algorithm expected performance for the Tesla stock}
    \label{ExpectedPerformanceTesla}
\end{figure}

\subsection{Global results - Testbench}
\label{SectionGlobalResults}

As previously suggested in this research paper, the TDQN algorithm is evaluated on the testbench introduced in Section \ref{SectionTestbench}, in order to draw more robust and trustful conclusions. Table \ref{PerformanceAssessment} presents the expected Sharpe ratio achieved by both the TDQN and benchmark trading strategies on the entire set of stocks included in this testbench.\\

\begin{table*}
  \caption{Performance assessment for the entire testbench}
  \label{PerformanceAssessment}
  \centering
  \begin{tabular}{lccccc}
    \toprule
    \multicolumn{1}{c}{\multirow{2}{*}{\textbf{Stock}}} & \multicolumn{5}{c}{\textbf{Sharpe Ratio}}\\
    \cmidrule(r){2-6}
    & \textit{B\&H} & \textit{S\&H} & \textit{TF} & \textit{MR} & \textit{TDQN} \\
    \midrule
    Dow Jones (DIA) & 0.684 & -0.636 & -0.325 & -0.214 & 0.684 \\
    S\&P 500 (SPY) & 0.834 & -0.833 & -0.309 & -0.376 & 0.834 \\
    NASDAQ 100 (QQQ) & 0.845 & -0.806 & 0.264 & 0.060 & 0.845 \\
    FTSE 100 (EZU) & 0.088 & 0.026 & -0.404 & -0.030 & 0.103 \\
    Nikkei 225 (EWJ) & 0.128 & -0.025 & -1.649 & 0.418 & 0.019 \\
    Google (GOOGL) & 0.570 & -0.370 & 0.125 & 0.555 & 0.227 \\
    Apple (AAPL) & 1.239 & -1.593 & 1.178 & -0.609 & 1.424 \\
    Facebook (FB) & 0.371 & -0.078 & 0.248 & -0.168 & 0.151 \\
    Amazon (AMZN) & 0.559 & -0.187 & 0.161 & -1.193 & 0.419 \\
    Microsoft (MSFT) & 1.364 & -1.390 & -0.041 & -0.416 & 0.987 \\
    Twitter (TWTR) & 0.189 & 0.314 & -0.271 & -0.422 & 0.238 \\
    Nokia (NOK) & -0.408 & 0.565 & 1.088 & 1.314 & -0.094 \\
    Philips (PHIA.AS) & 1.062 & -0.672 & -0.167 & -0.599 & 0.675 \\
    Siemens (SIE.DE) & 0.399 & -0.265 & 0.525 & 0.526 & 0.426 \\
    Baidu (BIDU) & -0.699 & 0.866 & -1.209 & 0.167 & 0.080 \\
    Alibaba (BABA) & 0.357 & -0.139 & -0.068 & 0.293 & 0.021 \\
    Tencent (0700.HK) & -0.013 & 0.309 & 0.179 & -0.466 & -0.198 \\
    Sony (6758.T) & 0.794 & -0.655 & -0.352 & 0.415 & 0.424 \\
    JPMorgan Chase (JPM) & 0.713 & -0.743 & -1.325 & -0.004 & 0.722 \\
    HSBC (HSBC) & -0.518 & 0.725 & -1.061 & 0.447 & 0.011 \\
    CCB (0939.HK) & 0.026 & 0.165 & -1.163 & -0.388 & 0.202 \\
    ExxonMobil (XOM) & 0.055 & 0.132 & -0.386 & -0.673 & 0.098 \\
    Shell (RDSA.AS) & 0.488 & -0.238 & -0.043 & 0.742 & 0.425 \\
    PetroChina (PTR) & -0.376 & 0.514 & -0.821 & -0.238 & 0.156 \\
    Tesla (TSLA) & 0.508 & -0.154 & -0.987 & 0.358 & 0.621 \\
    Volkswagen (VOW3.DE) & 0.384 & -0.208 & -0.361 & 0.601 & 0.216 \\
    Toyota (7203.T) & 0.352 & -0.242 & -1.108 & -0.378 & 0.304 \\
    Coca Cola (KO) & 1.031 & -0.871 & -0.236 & -0.394 & 1.068 \\
    AB InBev (ABI.BR) & -0.058 & 0.275 & 0.036 & -1.313 & 0.187 \\
    Kirin (2503.T) & 0.106 & 0.156 & -1.441 & 0.313 & 0.852 \\
    \midrule
    \midrule
    \multicolumn{1}{c}{\textbf{Average}} & 0.369 & -0.202 & -0.331 & -0.056 & 0.404 \\
    \bottomrule
  \end{tabular}
\end{table*}

Regarding the performance achieved by the benchmark trading strategies, it is important to differentiate the passive strategies (B\&H and S\&H) from the active ones (TF and MR). Indeed, this second family of trading strategies has more potential at the cost of an extra non-negligible risk: continuous speculation. Because the stock markets were mostly \textit{bullish} (price $p_t$ mainly increasing over time) with some instabilities during the test set trading period, it is not surprising to see the buy and hold strategy outperforming the other benchmark trading strategies. In fact, neither the trend following nor the mean reversion strategy managed to generate satisfying results on average on this testbench. It clearly indicates that there is a major difficulty to actively trade in such market conditions. This poorer performance can also be explained by the fact that such strategies are generally well suited to exploit specific financial patterns, but they lack versatility and thus often fail to achieve good average performance on a large set of stocks presenting diverse characteristics. Moreover, such strategies are generally more impacted by the trading costs due their higher trading frequency (for relatively short moving averages durations, as it is the case in this research paper).\\

Concerning the innovative trading strategy, the TDQN algorithm achieves promising results on the testbench, outperforming the benchmark active trading strategies on average. Nevertheless, the DRL trading strategy only barely surpasses the buy and hold strategy on these particular bullish markets which are so favourable to this simple passive strategy. Interestingly, it should be noted that the performance of the TDQN algorithm is identical or very close to the performance of the passive trading strategies (B\&H and S\&H) for multiple stocks. This is explained by the fact that the DRL strategy efficiently learns to tend toward a passive trading strategy when the uncertainty associated to active trading increases. It should also be emphasized that the TDQN algorithm is neither a trend following nor a mean reversion trading strategy as both financial patterns can be efficiently handled in practice. Thus, the main advantage of the DRL trading strategy is certainly its versatility and its ability to efficiently handle various markets presenting diverse characteristics.

\subsection{Discount factor discussion}
\label{SectionDiscountFactor}

As previously explained in Section \ref{SectionRLFormalism}, the discount factor $\gamma$ is concerned with the importance of future rewards. In the scope of this algorithmic trading problem, the proper tuning of this parameter is not trivial due to the significant uncertainty of the future. On the one hand, the desired trading policy should be long-term oriented ($\gamma \rightarrow 1$), in order to avoid a too high trading frequency and being exposed to considerable trading costs. On the other hand, it would be unwise to place too much importance on a stock market future which is particularly uncertain ($\gamma \rightarrow 0$). Therefore, a trade-off intuitively exists for the discount factor parameter.\\

This reasoning is validated by the multiple experiments performed to tune the parameter $\gamma$. Indeed, it was observed that there is an optimal value for the discount factor, which is neither too small nor too large. Additionally, these experiments highlighted the hidden link between the discount factor and the trading frequency, due to the trading costs. From the point of view of the RL agent, these costs represent an obstacle to overcome for a change in trading position to occur, due to the immediate reduced (and often negative) reward received. It models the fact that the trading agent should be sufficiently confident about the future in order to overcome the extra risk associated with the trading costs. The discount factor determining the importance assigned to the future, a small value for the parameter $\gamma$ will inevitably reduce the tendency of the RL agent to change its trading position, which decreases the trading frequency of the TDQN algorithm.

\subsection{Trading costs discussion}
\label{SectionTradingCostsDiscussion}

The analysis of the trading costs influence on a trading strategy behaviour and performance is capital, due to the fact that such costs represent an extra risk to mitigate. A major motivation for studying DRL solutions rather than pure prediction techniques that could also be based on DL architectures is related to the trading costs. As previously explained in Section \ref{SectionFormalisation}, the RL formalism enables the consideration of these additional costs directly into the decision-making process. The optimal policy is learned according to the trading costs value. On the contrary, a purely predictive approach would only output predictions about the future market direction or prices without any indications regarding an appropriate trading strategy taking into account the trading costs. Although this last approach offers more flexibility and could certainly lead to well-performing trading strategies, it is less efficient by design.\\

In order to illustrate the ability of the TDQN algorithm to automatically and efficiently adapt to different trading costs, Figure \ref{TradingCostsAnalysis} presents the behaviour of the DRL trading strategy for three different costs values, all other parameters remaining unchanged. It can clearly be observed that the TDQN algorithm effectively reduces its trading frequency when the trading costs increase, as expected. When these costs become too high, the DRL algorithm simply stops actively trading and adopts a passive approach (buy and hold or sell and hold strategies).

\begin{figure*}
    \centering
    \begin{subfigure}{0.333\textwidth}
        \includegraphics[scale=0.275, trim={0.8cm 2cm 0.8cm 2cm}, clip]{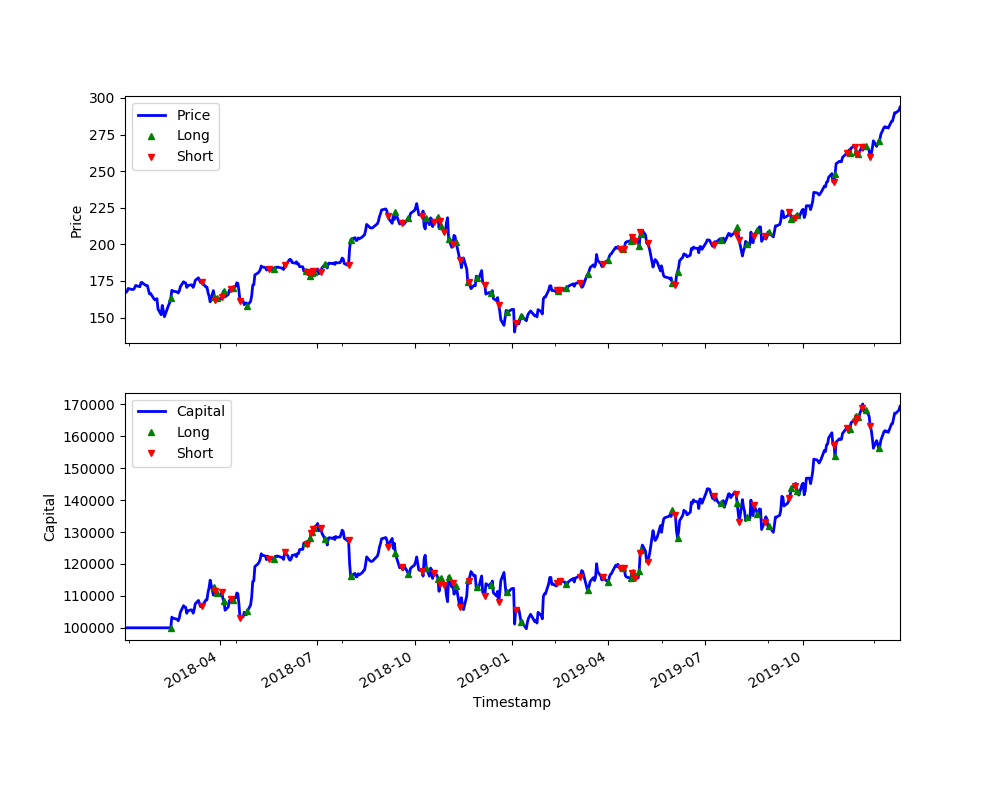}
        \caption{Trading costs: 0\%}
    \end{subfigure}%
    \begin{subfigure}{0.333\textwidth}
        \includegraphics[scale=0.275, trim={0.8cm 2cm 0.8cm 2cm}, clip]{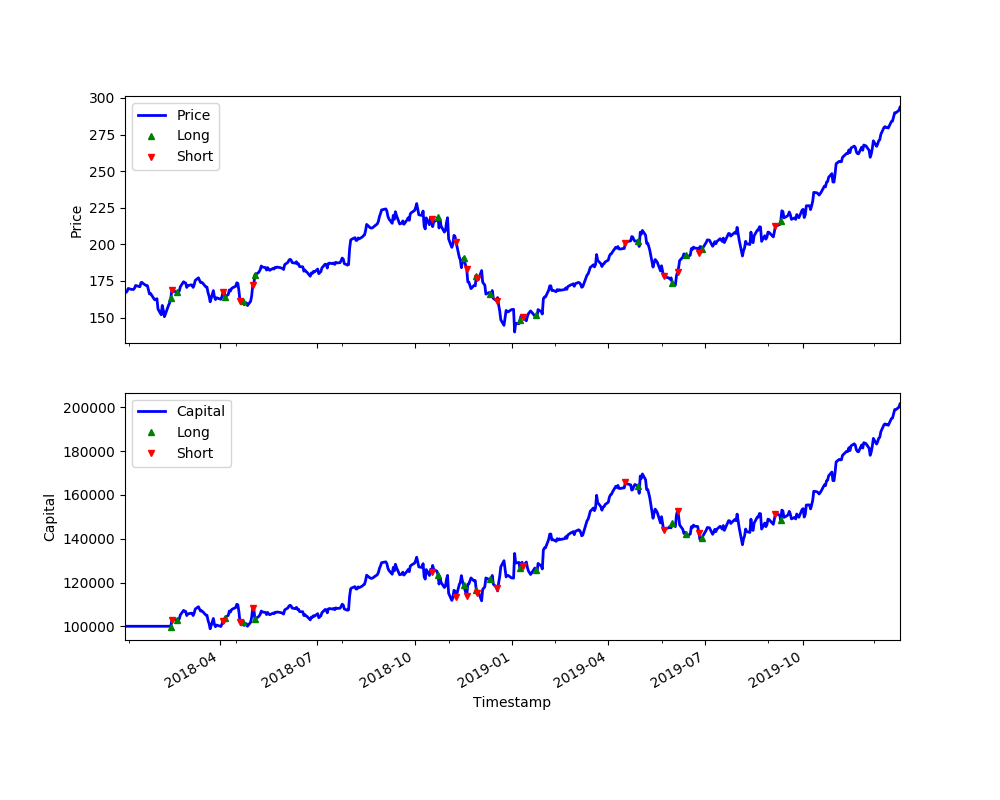}
        \caption{Trading costs: 0.1\%}
    \end{subfigure}%
    \begin{subfigure}{0.333\textwidth}
        \includegraphics[scale=0.275, trim={0.8cm 2cm 0.8cm 2cm}, clip]{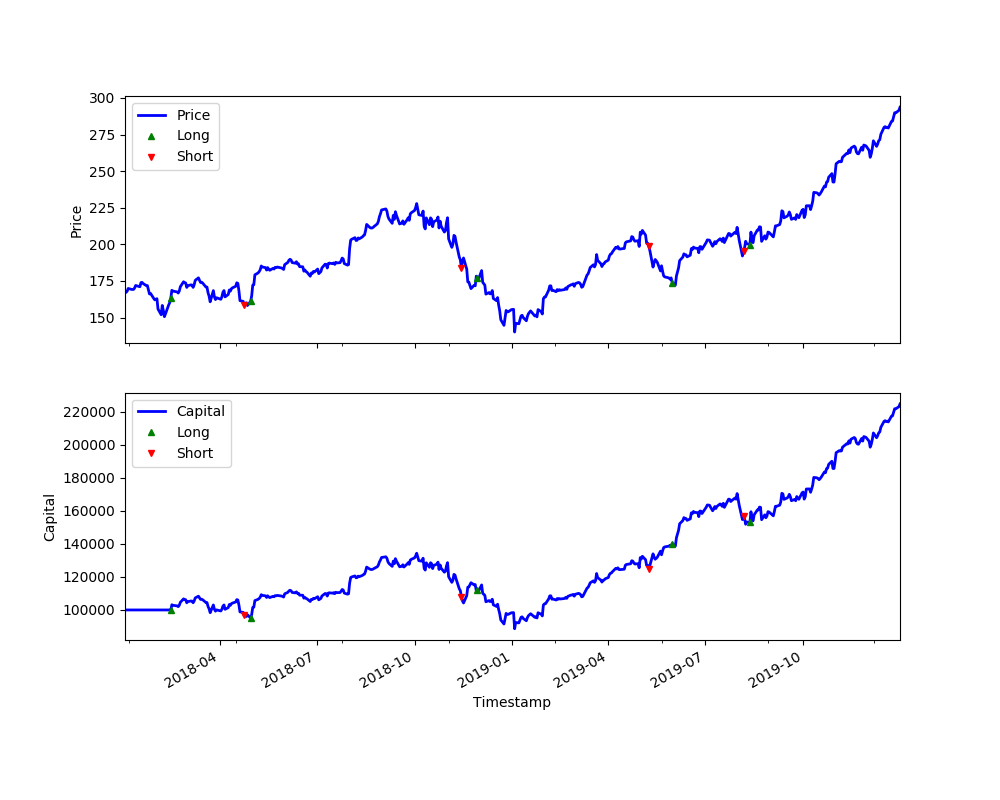}
        \caption{Trading costs: 0.2\%}
    \end{subfigure}
    \caption{Impact of the trading costs on the TDQN algorithm, for the Apple stock}
    \label{TradingCostsAnalysis}
\end{figure*}

\subsection{Core challenges}
\label{SectionChallenges}

Nowadays, the main DRL solutions successfully applied to real-life problems concern specific environments with particular properties such as games (see e.g. the famous AlphaGo algorithm developed by Google Deepmind \cite{Silver2016}). In this research paper, an entirely different environment characterised by a significant complexity and a considerable uncertainty is studied with the algorithmic trading problem. Obviously, multiple challenges were faced during the research around the TDQN algorithm, the major ones being summarised hereafter.\\

Firstly, the extremely poor observability of the trading environment is a characteristic that significantly limits the performance of the TDQN algorithm. Indeed, the amount of information at the disposal of the RL agent is really not sufficient to accurately explain the financial phenomena occurring during training, which is necessary to efficiently learn to trade. Secondly, although the distribution of the daily returns is continuously changing, the past is required to be representative enough of the future for the TDQN algorithm to achieve good results. This makes the DRL trading strategy particularly sensitive to significant market regime shifts. Thirdly, the TDQN algorithm overfitting tendency has to be properly handled in order to obtain a reliable trading strategy. As suggested in \cite{Zhang2018}, more rigorous evaluation protocols are required in RL due to the strong tendency of common DRL techniques to overfit. More research on this particular topic is required for DRL techniques to fit a broader range of real-life applications. Lastly, the substantial variance of DRL algorithms such as DQN makes it rather difficult to successfully apply these algorithms to certain problems, especially when the training and test sets differ considerably. This is a key limitation of the TDQN algorithm which was previously highlighted for the Tesla stock.

\section{Conclusion}
\label{SectionConclusion}

This scientific research paper presents the Trading Deep Q-Network algorithm (TDQN), a deep reinforcement learning (DRL) solution to the algorithmic trading problem of determining the optimal trading position at any point in time during a trading activity in stock markets. Following a rigorous performance assessment, this innovative trading strategy achieves promising results, surpassing on average the benchmark trading strategies. Moreover, the TDQN algorithm demonstrates multiple benefits compared to more classical approaches, such as an appreciable versatility and a remarkable robustness to diverse trading costs. Additionally, such data-driven approach presents the major advantage of suppressing the complex task of defining explicit rules suited to the particular financial markets considered.\\ 

Nevertheless, the performance of the TDQN algorithm could still be improved, from both a generalisation and a reproducibility point of view, to cite a few. Several research directions are suggested to upgrade the DRL solution, such as the use of LSTM layers into the deep neural network which should help to better process the financial time-series data, see e.g. \cite{Hausknecht2015}. Another example is the consideration of the numerous improvements implemented in the Rainbow algorithm, which are detailed in \cite{Sutton2018}, \cite{Hasselt2016}, \cite{Wang2016}, \cite{Schaul2016}, \cite{Bellemare2017}, \cite{Fortunato2018} and \cite{Hessel2018}. Another interesting research direction is the comparison of the TDQN algorithm with Policy Optimisation DRL algorithms such as the Proximal Policy Optimisation (PPO - \cite{Schulman2017}) algorithm.\\

The last major research direction suggested concerns the formalisation of the algorithmic trading problem into a reinforcement learning one. Firstly, the observation space $\mathcal{O}$ should be extended to enhance the observability of the trading environment. Similarly, some constraints about the action space $\mathcal{A}$ could be relaxed in order to enable new trading possibilities. Secondly, advanced RL reward engineering should be performed to narrow the gap between the RL objective and the Sharpe ratio maximisation objective. Finally, an interesting and promising research direction is the consideration of distributions instead of expected values in the TDQN algorithm in order to encompass the notion of risk and to better handle uncertainty.

\section*{Acknowledgements}

Thibaut Th\'{e}ate is a Research Fellow of the F.R.S.-FNRS, of which he acknowledges the financial support.

\bibliography{manuscript}

\clearpage

\appendix

\section{Derivation of action space \texorpdfstring{$\mathcal{A}$}{}}
\label{AppendixA}

\begin{theorem} The RL action space $\mathcal{A}$ admits an upper bound $\overline{Q_t}$ such that:
$$\overline{Q_t} = \frac{v^c_t}{p_t\ (1+C)}$$
\end{theorem}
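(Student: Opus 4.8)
The plan is to derive the upper bound directly from the cash positivity constraint (Equation \ref{EquationConstraint1}), since purchasing shares is precisely the operation that drains the agent's cash. First I would observe that the upper bound on $Q_t$ corresponds to the most aggressive \emph{buying} action, for which $Q_t > 0$ and therefore $|Q_t| = Q_t$. Under this regime the corrected cash update of Equation \ref{EquationCashValueCorrected} simplifies, as the two negative terms collapse into a single factor:
\begin{equation*}
v^c_{t+1} = v^c_t - Q_t\ p_t - C\ Q_t\ p_t = v^c_t - Q_t\ p_t\ (1 + C).
\end{equation*}

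Next I would impose the requirement that the cash value remain non-negative at the following time step, namely $v^c_{t+1} \geq 0$ from Equation \ref{EquationConstraint1}. Substituting the simplified update yields $v^c_t - Q_t\ p_t\ (1 + C) \geq 0$, which rearranges directly into $Q_t \leq v^c_t / (p_t\ (1 + C))$. This establishes the claimed upper bound $\overline{Q_t} = v^c_t / (p_t\ (1 + C))$, i.e. the largest quantity the agent may buy without driving its cash negative. Since $p_t > 0$, $C \geq 0$ and $v^c_t \geq 0$, the bound is well-defined and non-negative, so it remains attainable within the integer lattice of $\mathcal{A}$ after flooring.

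I expect no genuine analytical obstacle here, as the derivation is essentially a one-line rearrangement once the buying regime is identified. The only subtlety worth flagging is the need to argue that the second constraint (Equation \ref{EquationConstraint2Bis}), which protects the repayment obligation associated with short positions, is \emph{inactive} in this buying regime and hence does not tighten the upper bound; it instead governs the lower bound $\underline{Q_t}$. Making explicit that $v^c_{t+1} \geq 0 \geq -n_{t+1}\ p_t\ (1+\epsilon)(1+C)$ whenever the agent acquires a non-negative share count completes the justification that Equation \ref{EquationConstraint1} is the binding constraint for $\overline{Q_t}$.
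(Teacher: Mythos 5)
Your proof is correct and takes essentially the same route as the paper: impose the cash-positivity constraint (Equation \ref{EquationConstraint1}) on the cost-corrected cash update (Equation \ref{EquationCashValueCorrected}), observe that $|Q_t| = Q_t$ in the buying regime so the cost term merges into the factor $(1+C)$, and rearrange to obtain $\overline{Q_t} = v^c_t/(p_t\,(1+C))$. The only differences are organisational: the paper additionally works through the $Q_t < 0$ case to show Equation \ref{EquationConstraint1} is vacuous there (which you handle implicitly by noting the bound is non-negative), while your closing remark on the inactivity of Equation \ref{EquationConstraint2Bis} is a supplementary check the paper does not make.
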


\begin{proof} The upper bound of the RL action space $\mathcal{A}$ is derived from the fact that the cash value $v^c_t$ has to remain positive over the entire trading horizon (Equation \ref{EquationConstraint1}). Making the hypothesis that $v^c_t \ge 0$, the number of shares $Q_t$ traded by the RL agent at time step $t$ has to be set such that $v^c_{t+1} \ge 0$ as well. Introducing this condition into Equation \ref{EquationCashValueCorrected} expressing the update of the cash value, the following expression is obtained:

$$v^c_t - Q_t\ p_t - C\ |Q_{t}|\ p_t \ge 0$$

\noindent Two cases arise depending on the value of $Q_t$:\\

\noindent \underline{Case of $Q_t < 0$}: The previous expression becomes\\
$v^c_t - Q_t\ p_t + C\ Q_{t}\ p_t \ge 0$.\\
$\Leftrightarrow Q_t \le \frac{v^c_t}{p_t\ (1-C)}$.\\
The expression on the right side of the inequality is always positive due to the hypothesis that $v^c_t \ge 0$. Because $Q_t$ is negative in this case, the condition is always satisfied.\\
    
\noindent \underline{Case of $Q_t \ge 0$}: The previous expression becomes\\
$v^c_t - Q_t\ p_t - C\ Q_{t}\ p_t \ge 0$.\\
$\Leftrightarrow Q_t \le \frac{v^c_t}{p_t\ (1+C)}$.\\
This condition represents the upper bound (positive) of the RL action space $\mathcal{A}$.
\end{proof}

\vspace{0.3cm}

\begin{theorem} The RL action space $\mathcal{A}$ admits a lower bound $\underline{Q_t}$ such that:

$$\underline{Q_t} = \left\{\begin{matrix}
                        \frac{\Delta_t}{p_t\ \epsilon(1 + C)}\ \ \ \ \ \ \ \ \ \text{if } \Delta_t \ge 0\\ 
                        \frac{\Delta_t}{p_t\ (2C + \epsilon(1 + C))} \ \ \ \text{if } \Delta_t < 0
                    \end{matrix}
                    \right.$$

\noindent with $\Delta_t = -v^c_t-n_t\ p_t\ (1+\epsilon)(1+C)$.
\end{theorem}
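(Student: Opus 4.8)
The plan is to mirror the structure of the upper-bound derivation, but to start this time from the second constraint (Equation \ref{EquationConstraint2Bis}), namely $v^c_{t+1} \ge -n_{t+1}\,p_t\,(1+\epsilon)(1+C)$, rather than from the cash-positivity constraint. First I would substitute both the corrected cash-value update (Equation \ref{EquationCashValueCorrected}) and the identity $n_{t+1} = n_t + Q_t$ into this constraint, which produces a single inequality in $Q_t$ still containing the term $|Q_t|$. Exactly as in the proof of the upper bound, the presence of this absolute value forces the argument to split into the two cases $Q_t \ge 0$ and $Q_t < 0$.

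In each case I would move all terms proportional to $Q_t$ to the left and the constant terms to the right, at which point the right-hand side is recognised to be precisely $\Delta_t = -v^c_t - n_t\,p_t\,(1+\epsilon)(1+C)$. The coefficient multiplying $Q_t\,p_t$ then collapses after cancellation: it simplifies to $\epsilon(1+C)$ in the case $Q_t \ge 0$ and to $2C + \epsilon(1+C)$ in the case $Q_t < 0$. Since both coefficients are strictly positive (as $C,\epsilon \ge 0$ and $p_t > 0$), dividing through preserves the direction of the inequality and yields the two candidate bounds $Q_t \ge \Delta_t / \big(p_t\,\epsilon(1+C)\big)$ and $Q_t \ge \Delta_t / \big(p_t\,(2C+\epsilon(1+C))\big)$, which already match the two branches of the claimed expression.

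The main obstacle is that the statement is piecewise in the sign of $\Delta_t$, whereas the natural derivation is piecewise in the sign of $Q_t$; the crux of the proof is therefore to reconcile these two case splits and verify their mutual consistency. The key observation is that, because $p_t$ and both coefficients are positive, the sign of each candidate bound coincides with the sign of $\Delta_t$. Concretely, when $\Delta_t \ge 0$ the bound coming from the $Q_t \ge 0$ branch is non-negative and genuinely binding, while the $Q_t < 0$ branch forces $Q_t \ge 0$ and thus admits no negative action; conversely, when $\Delta_t < 0$ the bound coming from the $Q_t < 0$ branch is negative and binding, while the $Q_t \ge 0$ branch reduces to a vacuous inequality automatically satisfied by every non-negative $Q_t$. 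Retaining the binding branch in each regime then produces exactly the piecewise lower bound $\underline{Q_t}$ as stated, which completes the argument.
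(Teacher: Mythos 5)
Your proposal is correct and follows essentially the same route as the paper: substitute the corrected cash update and $n_{t+1} = n_t + Q_t$ into Equation \ref{EquationConstraint2Bis}, split on the sign of $Q_t$ because of $|Q_t|$, obtain the two candidate bounds $\Delta_t/\bigl(p_t\,\epsilon(1+C)\bigr)$ and $\Delta_t/\bigl(p_t\,(2C+\epsilon(1+C))\bigr)$, and then reconcile the $Q_t$-case split with the $\Delta_t$-case split. If anything, your final reconciliation (showing the $Q_t<0$ branch is infeasible when $\Delta_t \ge 0$ and the $Q_t \ge 0$ branch is vacuous when $\Delta_t < 0$) is stated more explicitly than the paper's appeal to ``the most constraining lower bound of the two,'' but the argument and result are the same.
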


\begin{proof}
The lower bound of the RL action space $\mathcal{A}$ is derived from the fact that the cash value $v^c_t$ has to be sufficient to get back to a neutral position ($n_t = 0$) over the entire trading horizon (Equation \ref{EquationConstraint2Bis}). Making the hypothesis that this condition is satisfied at time step $t$, the number of shares $Q_t$ traded by the RL agent should be such that this condition remains true at the next time step $t+1$. Introducing this constraint into Equation \ref{EquationCashValueCorrected}, the following inequality is obtained:

$$v^c_t - Q_t\ p_t - C\ |Q_{t}|\ p_t \ge -(n_t + Q_t)\ p_t\ (1+C)(1+\epsilon)$$

\noindent Two cases arise depending on the value of $Q_t$:\\

\noindent \underline{Case of $Q_t \ge 0$}: The previous expression becomes\\
$v^c_t - Q_t\ p_t - C\ Q_{t}\ p_t \ge -(n_t + Q_t)\ p_t\ (1+C)(1+\epsilon)$\\
$\Leftrightarrow v^c_t \ge - n_t\ p_t\ (1+C)(1+\epsilon) - Q_t\ p_t\ \epsilon\ (1+C)$\\
$\Leftrightarrow Q_t \ge \frac{-v^c_t - n_t\ p_t\ (1+C)(1+\epsilon)}{p_t\ \epsilon\ (1+C)}$\\
The expression on the right side of the inequality represents the first lower bound for the RL action space $\mathcal{A}$.\\

\noindent \underline{Case of $Q_t < 0$}: The previous expression becomes\\
$v^c_t - Q_t\ p_t + C\ Q_{t}\ p_t \ge -(n_t + Q_t)\ p_t\ (1+C)(1+\epsilon)$\\
$\Leftrightarrow v^c_t \ge - n_t\ p_t\ (1+C)(1+\epsilon) - Q_t\ p_t\ (2C + \epsilon + \epsilon C)$\\
$\Leftrightarrow Q_t \ge \frac{-v^c_t - n_t\ p_t\ (1+C)(1+\epsilon)}{p_t\ (2C + \epsilon(1 + C))}$\\
The expression on the right side of the inequality represents the second lower bound for the RL action space $\mathcal{A}$.\\

Both lower bounds previously derived have the same numerator, which is denoted $\Delta_t$ from now on. This quantity represents the difference between the maximum assumed cost to get back to a neutral position at the next time step $t+1$ and the current cash value of the agent $v^c_t$. The expression tests whether the agent can pay its debt in the worst assumed case or not at the next time step, if nothing is done at the current time step ($Q_t = 0$). Two cases arise depending on the sign of the quantity $\Delta_t$:\\

\noindent \underline{Case of $\Delta_t < 0$}: The trading agent has no problem paying its debt in the situation previously described. This is always true when the agent owns a positive number of shares ($n_t \ge 0$). This is also always true when the agent owns a negative number of shares ($n_t < 0$) and when the price decreases ($p_{t} < p_{t-1}$) due to the hypothesis that Equation \ref{EquationConstraint2Bis} was verified for time step $t$. In this case, the most constraining lower bound of the two is the following:

$$\underline{Q_t} = \frac{\Delta_t}{p_t\ (2C + \epsilon(1 + C))}$$

\vspace{0.2cm}

\noindent \underline{Case of $\Delta_t \ge 0$}: The trading agent may have problem paying its debt in the situation previously described. Following a similar reasoning than for the previous case, the most constraining lower bound of the two is the following:

$$\underline{Q_t} = \frac{\Delta_t}{p_t\ \epsilon(1 + C)}$$

\end{proof}

\end{document}